\theoremstyle{definition}
\newtheorem{theorem}{\normalfont\bfseries Theorem}
\newtheorem{definition}{\normalfont\bfseries Definition}
\newtheorem{corollary}{\normalfont\bfseries Corollary}
\newtheorem{remark}{\normalfont\bfseries Remark}
\newcommand{\kd}{k_{\rm d}}
\newcommand{\ks}{k_{\rm s}}
\newcommand{\vmax}{v_{\rm max}}
\newcommand{\Dst}{D_{\rm st}}
\newcommand{\Dsf}{D_{\rm sf}}
\newcommand{\ksf}{\kappa_{\rm{sf}}}
\newcommand{\GaR}{\Gamma_{\rm R}(\omega)}
\newcommand{\GaI}{\Gamma_{\rm I}(\omega)}
\title{Safe and Stable Connected Cruise Control for Connected Automated Vehicles with Response Lag

\author{Yuchen Chen, G{\'{a}}bor Orosz, and Tamas G. Molnar}
        % <-this % stops a space
\thanks{Y. Chen is with the Department of Mechanical Engineering, University of Michigan, Ann Arbor, MI 48109, USA,
{\tt\small ethanch@umich.edu}.}%
\thanks{G. Orosz is with the Department of Mechanical Engineering and with the Department of Civil and Environmental Engineering, University of Michigan, Ann Arbor, MI 48109, USA,
{\tt\small orosz@umich.edu}.}%
\thanks{T. G. Molnar is with the Department of Mechanical Engineering, Wichita State University, Wichita, KS 67260, USA,
{\tt\small tamas.molnar@wichita.edu}.}%
}
\begin{document}
\maketitle

%%%%%%%%%%%%%%%%%%%%%%%%%%%%%%%%%%%%%%%%%%%%%%%%%%%%%%%%%%%%%%%%%%%%%%%%%%%%%%%%
\begin{abstract}
Controlling connected automated vehicles (CAVs) via vehicle-to-everything (V2X) connectivity holds significant promise for improving fuel economy and traffic efficiency.
However, to deploy CAVs and reap their benefits, their controllers must guarantee their safety.
In this paper, we apply control barrier function (CBF) theory to investigate the safety of CAVs implementing connected cruise control (CCC).
Specifically, we study how stability, connection architecture, and the CAV's response time impact the safety of CCC.
Through safety and stability analyses, we derive stable and safe choices of control gains, and show that safe CAV operation requires plant and head-to-tail string stability in most cases.
Furthermore, the reaction time of vehicles—which is represented as a first-order lag—has a detrimental effect on safety. 
We determine the critical value of this lag, above which safe CCC gains do not exist.
To guarantee safety even with lag while preserving the benefits of CCC, we synthesize {\em safety-critical CCC} using CBFs.
With the proposed safety-critical CCC, the CAV can leverage information from connected vehicles farther ahead to improve its safety.
We evaluate this controller by numerical simulation using real traffic data.
\end{abstract}

\begin{IEEEkeywords}
Connected automated vehicle, connected cruise control, safety-critical control, stability analysis, first-order lag
\end{IEEEkeywords}

%%%%%%%%%%%%%%%%%%%%%%%%%%%%%%%%%%%%%%%%%%%%%%%%%%%%%%%%%%%%%%%%%%%%%%%%%%%%%%%%
\section{Introduction}
\label{sec:intro}
%%%%%%%%%%%%%%%%%%%%%%%%%%%%%%%%%%%%%%%%%%%%%%%%%%%%%%%%%%%%%%%%%%%%%%%%%%%%%%%%

\IEEEPARstart{T}{he} promise of automated vehicles (AVs) to greatly improve safety, passenger comfort, fuel efficiency, and traffic efficiency has led to a surge of research and development on AVs in recent years.
As a result, longitudinal controller designs for AVs such as {\em adaptive cruise control} (ACC) have been studied intensively\cite{Gunter2021arecommercially,Bekiaris-Liberis2018} and deployed extensively.
Vehicle-to-everything (V2X) connectivity, which enables connected automated vehicles (CAVs) to exchange information with other road users, can further enhance the performance of AVs.
With more information about the surrounding traffic, strategies like {\em cooperative adaptive cruise control }(CACC)~\cite{wang2018review_CACC} enable platoons of CAVs to drive cooperatively to save fuel \cite{Turri2015CACCfuel} and achieve string stability \cite{Van2019CACCstability}.
Additionally, {\em connected cruise control} (CCC)~\cite{zhang2016motif}, which controls the motion of a single CAV based on the information shared by the preceding connected (but not necessarily automated) vehicles, outperforms ACC in experiments \cite{jin2018experimental} and also shows other significant benefits, such as reduced traffic congestion \cite{Qin2018CCCtraffic} and improved fuel economy \cite{Shen2023Energy}.
Alternatively, {\em leading cruise control} (LCC)~\cite{wang2022LCC} and {\em connected traffic control} (CTC)~\cite{TRSC2024} utilize information from following vehicles to smooth the traffic flow.

%%%%%%%%%%%%%%%%%%%%%%%%%%%%%%%%%%%%%%%%%%%%%%%%%%%%%%%%%%%%
\begin{figure}[t]
    \centering
    \includegraphics{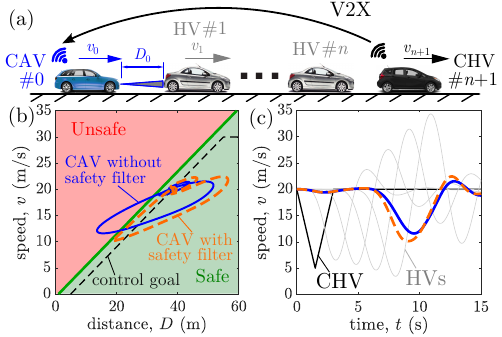}
    \vspace{-3mm}
    \caption{
    (a) A connected automated vehicle (CAV) responds to the preceding human-driven vehicles (HVs) and to the head connected human-driven vehicle (CHV) using vehicle-to-everything (V2X) connectivity and connected cruise control (CCC). (b-c) Behavior of the CCC with unsafe control parameter choice (blue), and the proposed safety-critical CCC that minimally modifies the above CCC to ensure safety using the safety filter (orange).}
    \label{fig:3 veh setup}
    \vspace{-3mm}
\end{figure}
%%%%%%%%%%%%%%%%%%%%%%%%%%%%%%%%%%%%%%%%%%%%%%%%%%%%%%%%%%%%

Safety is the top priority when deploying AVs and CAVs.
Recently, the safety of mixed traffic (including automated and human-driven vehicles) has been studied, considering reinforcement learning~\cite{cheng2021enhancing} and head-to-tail string stability analysis~\cite{liu2023structural}.
Furthermore, safety-critical controllers have been developed for AVs and CAVs with approaches including reachability analysis~\cite{alam2014guaranteeing}, formal methods~\cite{nilsson2015correct}, reinforcement learning~\cite{Li2020SafeReinforcementLearning}, and model predictive control~\cite{massera2017safe, shen2024energy}.
Additionally, control barrier functions (CBFs) have gained popularity for safety-critical control.
Some applications of CBFs include adaptive cruise control~\cite{ames2014control, Waqas2022ACCCBF}, obstacle avoidance with AVs~\cite{chen2018obstacle}, lane changing~\cite{hu2023safety}, roundabout crossing~\cite{abduljabbar2021cbfbased}, merging scenarios~\cite{hao2023merge}, and safe traffic control by CAVs~\cite{zhao2023safetycritical}, with experimental validations~\cite{gunter2022experimental}.
Notably, CBFs were used in~\cite{He2018} to establish safe CCC, and a thorough analysis was then conducted in~\cite{Tamas2023CDC}.
A robust, safe CCC design was proposed in~\cite{alan2023control} with experimental validation on a heavy-duty truck.
However, these studies predominantly focus on scenarios where the CAV responds solely to the immediate preceding vehicle, while the relationship between connectivity to multiple vehicles ahead and safety is yet to be explored.
In addition, while safety-critical control for time delay systems with state delay has been addressed in previous studies such as~\cite{kiss2023safetydelay, ren2021razumikhin, ren2022razumikhin}, input delays are still not taken into account for safety-critical controller synthesis.

To address these gaps, our recent work~\cite{chen2024CCCsafety} investigated the safety of CCC in connected vehicle networks, where CAVs respond to the connected human-driven vehicles (CHVs) farther ahead via V2X connectivity, as illustrated in Fig.~\ref{fig:3 veh setup}(a).
Using safety charts, we provide a guideline to select safe control gains for existing CCC designs.
Furthermore, we also proposed a safety-critical CCC method, which allows existing, efficient CCC algorithms to operate when it is safe and only minimally intervenes when danger appears to ensure safety; see Fig.~\ref{fig:3 veh setup}(b-c).
Compared to our previous work in~\cite{chen2024CCCsafety}, this paper contains the following contributions:
\begin{itemize}
    \item We capture the response time of CAVs as a first-order lag applied to the control input and analyze how it affects the safety of existing CCC designs.
    Via safety charts, we show that the lag negatively impacts safety. 
    We quantify a critical value of the lag for which safe choices of control gains cease to exist.
    \item We extend the safety analysis to cases with the general connectivity structure, where the CAV responds to multiple connected vehicles ahead of it, and potentially also utilizes information about their acceleration. 
    \item Apart from studying safety, we also provide a detailed stability analysis for the general CCC designs mentioned above. 
    Moreover, we study the interplay of stability and safety by comparing stability charts with safety charts, and we demonstrate that in most cases, existing CCC designs can achieve safety when the vehicle chain is both plant and head-to-tail string stable.
    \item To overcome the limitations of existing CCC methods, we propose a safety-critical CCC algorithm based on CBF theory that takes the lag into account and guarantees safety.
    This is achieved by minimally modifying existing CCC laws that have high performance (i.e., provide plant and head-to-tail string stability) but lack safety guarantees.
    \item We evaluate the performance of safety-critical CCC via numerical simulations, while utilizing real traffic data for the motions of human-driven vehicles. 
    Through simulation, we show that the proposed controller can leverage connectivity to improve safety, and it guarantees safety regardless of the lag and connectivity architecture while preserving the high performance of CCC.
\end{itemize}

The rest of this paper is organized as follows.
In Section~\ref{sec:ccc}, we introduce longitudinal vehicle dynamics with first-order lag and the framework of CCC. 
In Section~\ref{sec:stability}, we derive conditions for the plant stability and head-to-tail string stability of CCC. 
In Section~\ref{sec:CBF}, we provide background on CBFs. 
In Section~\ref{sec:safccc}, we establish safety conditions for existing CCC laws, and visualize them via safety charts.
Next, we propose {\em safety-critical CCC} with lag, and demonstrate its behavior using numerical simulations. 
Finally, we conclude our results in Section~\ref{sec:concl}.

%%%%%%%%%%%%%%%%%%%%%%%%%%%%%%%%%%%%%%%%%%%%%%%%%%%%%%%%%%%%%%%%%%%%%%%%%%%%%%%%
\section{Connected Cruise Control}
\label{sec:ccc}
%%%%%%%%%%%%%%%%%%%%%%%%%%%%%%%%%%%%%%%%%%%%%%%%%%%%%%%%%%%%%%%%%%%%%%%%%%%%%%%%

Consider a setup where a connected automated vehicle (CAV) is controlled to follow a chain of human-driven vehicles (HVs) on a single lane while responding to connected human-driven vehicles (CHVs) farther ahead. 
The CAV uses on-board sensors to evaluate its own speed $v_0$, the preceding HV's speed $v_1$, and the distance $D_0$ between them. Additionally, the CAV also receives the CHVs' speeds $v_k$ through vehicle-to-everything (V2X) connectivity. An illustration of this setup is depicted in Fig.~\ref{fig:3 veh setup}(a), where the CAV is only linked with one CHV that is ${n+1}$ vehicles ahead.

We represent the motion of the HVs using the dynamics:
\begin{align}
\begin{split}
    \dot{D}_{i}(t) & = v_{i+1}(t) - v_{i}(t), 
    \\
    \dot{v}_{i}(t) & = a_{i}(t),
    \\
    a_i(t) &= u_{i}(t-\tau),
\end{split}
\label{eq:HV system}
\end{align}
for $i \in \{1,...,n\}$, where $a_i$ is the actual acceleration, $u_{i}$ is the desired one, and $\tau$ captures driver reaction time and powertrain delays.
We use the optimal velocity model (OVM) to characterize the HVs' driving behavior during the car-following task~\cite{bando1998analysis}: 
\begin{equation}
    u_{i} = A_{i} \big( V_i(D_{i}) - v_{i} \big) + B_{i} \big(v_{i+1} - v_{i} \big),
\label{eq:HVcon}
\end{equation}
for ${i \in \{1,...,n\}}$. 
In this model, each HV's desired acceleration is determined by the speed difference with gain $B_i$ and the distance with gain $A_i$ and the range policy:
\begin{equation}
    V_i(D) = \min\{\kappa_i(D-\Dst),\vmax\},
\label{eq:HV range policy}
\end{equation}
for ${i \in \{1,...,n\}}$. 
This policy outlines a target velocity determined by the distance to the vehicle ahead, which is zero at the standstill distance $\Dst$ and grows linearly with gradient $\kappa_i$ up to the speed limit $\vmax$.

To describe the CAV's dynamics, we incorporate its response time $\xi$ as a first-order lag:
\begin{align}
\begin{split}
    \dot{D}_0(t) & = v_{1}(t) - v_{0}(t), 
    \\
    \dot{v}_0(t) & = a_{0}(t), 
    \\
    \dot{a}_0(t) & = \frac{1}{\xi} \big(u_{0}(t) - a_{0}(t)\big).
\end{split}
\label{eq:CAV system w/ lag}
\end{align}
The CAV executes the acceleration command $u_{0}$, 
% which has lag $\xi$ between CAV acceleration $a_0$. 
while its actual acceleration $a_0$ has lag $\xi$.
Note that when ${\xi \rightarrow 0}$, ${u_{0}(t) \equiv a_{0}(t)}$ and system (\ref{eq:CAV system w/ lag}) can be simplified to the first two equations.
For simplicity, we neglect the rolling and air resistance of the CAV, which could potentially reduce speed and improve safety.
We also omit input constraints for the CAV's dynamics, since well-tuned controllers can successfully prevent the input from exceeding the physical limit, as will be shown in the following content.
Notably, safety-critical control with input constraints is an open problem to be explored.
This is addressed in the context of control barrier functions, for example, in~\cite{xiao2022sufficient, chen2021backup, Ames2021icbf}.

%%%%%%%%%%%%%%%%%%%%%%%%%%%%%%%%%%%%%%%%%%%%%%%%%%%%%%%%%%%%
\begin{figure}[t]
    \centering
    \includegraphics{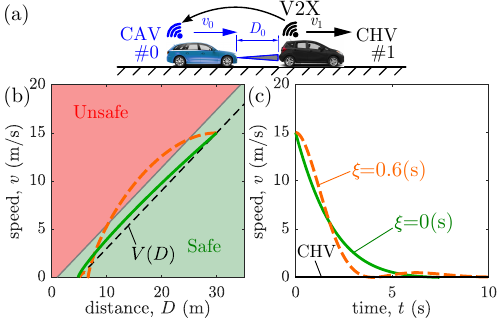}
    \vspace{-3mm}
    \caption{Simulations of system (\ref{eq:CAV system w/ lag}) for $\xi=0\, \mathrm{s}$ (green) and $\xi=0.6\, \mathrm{s}$ (orange) using CCC (\ref{eq:CCC general}) with $n=1$, where a CAV performs an emergency brake to avoid crashing with the stopped CHV.}
    \label{fig:2veh nominal CCC w/ lag}
    \vspace{-3mm}
\end{figure}
%%%%%%%%%%%%%%%%%%%%%%%%%%%%%%%%%%%%%%%%%%%%%%%%%%%%%%%%%%%%

%%%%%%%%%%%%%%%%%%%%%%%%%%%%%%%%%%%%%%%%%%%%%%%%%%%%%%%%%%%%
\setlength{\tabcolsep}{1pt}
\begin{table}
\caption {Parameters of the numerical case studies}
\vspace{-3mm}
\label{tab:parameters}
\begin{center}
\small
\begin{tabularx}{\columnwidth}{ccccc}
    \toprule
    Vehicle & Variable & Symbol & Value & Unit \\
    \midrule
    \multirow{3}{*}{All}
    & speed limit &$\vmax$ & $30$ & $\mathrm{m/s}$ \\
    & standstill distance &$D_{\mathrm{st}}$ & $5$ & $\mathrm{m}$ \\
    & acceleration limit & $(a_{\rm min},a_{\rm max})$ & $(7,3)$ & $\mathrm{m/s^{2}}$ \\
    \midrule
    \multirow{3}{*}{HV}
    & delay & $\tau$& $0.9$ & $\mathrm{s}$ \\
    % & braking limit & $a_{\rm min}$ & $7$ & $\mathrm{m/s^{2}}$ \\ 
    % & acceleration limit & $a_{\rm max}$ & $3$ & $\mathrm{m/s^{2}}$ \\
    & range policy gradient &$\kappa_{\rm h}$ & $0.6$ & $\mathrm{1/s}$ \\ 
    & driver parameters &$(A_{\rm h},B_{\rm h})$ & $(0.1,0.6)$ & $\mathrm{1/s}$ \\
    \midrule
    \multirow{8}{*}{CAV}
    & range policy gradient & $\kappa$ & $0.6$ & $\mathrm{1/s}$ \\ 
    & safe gains (point P) & $(A,B_1,B_{n+1})$ & $(0.6,0.53,0.03)$ & $\mathrm{1/s}$ \\ 
    & unsafe gains (point Q) & $(A,B_1,B_{n+1})$ & $(0.6,0.53,0.5)$ & $\mathrm{1/s}$ \\ 
    & safe distance & $\Dsf$& $1$ & $\mathrm{m}$ \\
    & inverse time headway &$\kappa_{\rm{sf}}$ & $0.6$ & $\mathrm{1/s}$ \\ 
    & speed difference limit &$\bar{v}$ & $15$ & $\mathrm{m/s}$ \\ 
    & CBF parameter &$\gamma$ & $1$ & $\mathrm{1/s}$ \\ 
    & \makecell{eCBF parameter} &$\gamma_{\rm e}$ & $1$ & $\mathrm{1/s}$ \\ 
    \midrule
    \multirow{1}{*}{CHV}
    % & deceleration & $a_{\rm dec}$& $7$ & $\mathrm{m/s^2}$ \\
    % & acceleration &$a_{\rm acc}$ & $3$ & $\mathrm{m/s^2}$ \\ 
    & speed perturbation &$v_{\rm pert}$ & $15$ & $\mathrm{m/s}$ \\ 
    \bottomrule
\end{tabularx}
\normalsize
\end{center}
\vspace{-5mm}
\end{table}
%%%%%%%%%%%%%%%%%%%%%%%%%%%%%%%%%%%%%%%%%%%%%%%%%%%%%%%%%%%%

For car-following purposes, we implement a desired controller for the CAV, ${u_{0}=\kd(x)}$, which responds to the state ${x= \begin{bmatrix} D_0 & v_{0} & a_{0} & v_{1} \end{bmatrix}^{\top}}$.
Specifically, we select the {\em connected cruise control} (CCC) strategy, which was first presented in~\cite{zhang2016motif} and validated through experiments in~\cite{jin2018experimental}.
Here we follow~\cite{zhang2016motif}, but with slightly changed notation that fits our purposes:
\begin{align}
\begin{split}
    \kd(x) =  A \big( V(D_0) - v_0 \big) + B_1 \big( W(v_{1}) - v_0 \big) &
    \\
    + \sum_{k \in \Phi} B_k \big( W(v_{k}) - v_0 \big)&.
\end{split}
\label{eq:CCC general}
\end{align}
CCC responds to the distance from and velocity of the vehicle ahead with gains $A$ and $B_1$, respectively, and also to the speeds of CHVs with gains $B_k$ ${(k > 1)}$. 
Here, $v_k$ denotes the speed of the CHV, which is $k$ vehicles ahead, and $\Phi$ denotes the set of indices corresponding to the CHVs whose velocity are accessible to the CAV via connectivity. 
For instance, in Fig.~\ref{fig:3 veh setup}(a), the CAV only connects to CHV${\ \#n\rm{+1}}$, thus ${\Phi=\{n+1\}}$. In this case, (\ref{eq:CCC general}) becomes:
\begin{align}
\begin{split}
    \kd(x) = A \big( V(D_0) - v_0 \big) + B_1 \big( W(v_{1}) - v_0 \big) & \\
    + B_{n+1} \big( W(v_{n+1}) - v_0 \big) &.
\end{split}
\label{eq:CCC 3veh}
\end{align}
When the CAV only responds to the preceding vehicle (${\Phi=\emptyset}$), CCC~(\ref{eq:CCC general}) only contains distance and speed difference terms with gains $A$ and $B_1$ like the OVM~(\ref{eq:HVcon}).
CAV implements the range policy: 
\begin{align}
\begin{split}
    V(D) & = \min\{\kappa(D-\Dst),\vmax\},
\end{split}
\label{eq:V}
\end{align}
and the speed policy:
\begin{align}
\begin{split}
    W(v) & = \min\{v,\vmax\},
\end{split}
\label{eq:W}
\end{align}
to ensure it does not exceed the speed limit.

Notice that CCC (\ref{eq:CCC general}) does not provide formal safety guarantees. 
This is illustrated in Fig.~\ref{fig:2veh nominal CCC w/ lag}, where CCC, given by~(\ref{eq:CAV system w/ lag},\ref{eq:CCC general}) with ${\xi=0\, \mathrm{s}, 0.6\, \mathrm{s}}$ and ${n=1}$, is simulated in a 2-vehicle scenario shown in panel (a).
We use $A=0.6$, $B_1=0.5$ for CCC (\ref{eq:CCC general}) and other parameters listed in Table~\ref{tab:parameters}.
The preceding CHV stops suddenly, and the CAV responds to it with a lag; see panel (c).
As shown by panel (b), the CAV closely follows the range policy $V(D)$ if the lag is small (green).
However, as the lag is increased (orange), the CAV maintains shorter distance and becomes unsafe.
Motivated by this, we will formally analyze how lag affects the safety of CAVs and calculate the critical lag above which safe choices of the gains $A$, $B_1$ and $B_k$ do not exist for CCC~(5).
Before the safety analysis, first we discuss stability analysis as preliminary.

%%%%%%%%%%%%%%%%%%%%%%%%%%%%%%%%%%%%%%%%%%%%%%%%%%%%%%%%%%%%%%%%%%%%%%%%%%%%%%%%
\section{Stability Analysis}
\label{sec:stability}
%%%%%%%%%%%%%%%%%%%%%%%%%%%%%%%%%%%%%%%%%%%%%%%%%%%%%%%%%%%%%%%%%%%%%%%%%%%%%%%%

Realizing stable behavior for the CAV is a fundamental requirement for CCC. 
In this section, we derive the plant and head-to-tail string stability conditions~\cite{zhang2016motif, xiao2008Stabilitylag} for the vehicle chain containing one CAV at the tail and multiple HVs and CHVs in front, see Fig.~\ref{fig:3 veh setup}(a).
Plant stability indicates that a vehicle is able to approach a constant-speed equilibrium in an asymptotically stable manner.
Head-to-tail string stability means that velocity fluctuations are attenuated as they propagate from the head to the tail vehicle along the chain. 
Plant and head-to-tail string stability can be evaluated by calculating the so-called head-to-tail transfer function \cite{zhang2016motif},
which is established in the Laplace domain for the corresponding linearized dynamics.

To derive the head-to-tail transfer function, first we linearize the systems~(\ref{eq:HV system},\ref{eq:HVcon}) and~(\ref{eq:CAV system w/ lag},\ref{eq:CCC general}) around the equilibrium:
\begin{align}
\begin{split}
        D_i(t) \equiv D_{i}^{\ast},\quad 
        v_i(t) \equiv v^{\ast},\quad 
        a_{i}(t) \equiv 0,
\end{split}
\label{eq:equilibrium}
\end{align}
for ${i \in \{0,...,n\}}$, where all vehicles drive at the uniform equilibrium speed $v^{\ast}$, while maintaining equilibrium distance $D_{i}^{\ast}$ given by ${v^{\ast} = V(D_{0}^{\ast}) = V_i(D_{i}^{\ast})}$ for ${i \in \{1,...,n\}}$. 

Then we define perturbations around the equilibrium: 
\begin{align}
        \Tilde{D}_i = D_i - D_{i}^{\ast},\quad 
        \Tilde{v}_i = v_i - v^{\ast},\quad  
        \Tilde{a}_i = a_i,
\label{eq:perturbations}
\end{align}
for ${i \in \{0,...,n\}}$, and consider state vectors with these perturbations as 
${x_0 = [\, \Tilde{D}_0 \, \ \Tilde{v}_0 \, \ \Tilde{a}_0 \,]^{\top}}$ 
for the CAV and 
${x_i = [\, \Tilde{D}_i \, \ \Tilde{v}_i \,]^{\top}}$ 
for the HVs ${i \in \{1,...,n\}}$. The speed fluctuations are obtained as:
\begin{align}
\begin{split}
        \Tilde{v}_0 &= \mathbf{c}_0 x_0,\quad \mathbf{c}_0 =\begin{bmatrix} 0 & 1 & 0 \end{bmatrix}, 
        \\
        \Tilde{v}_i &= \mathbf{c}_i x_i,\ \quad \mathbf{c}_i =\begin{bmatrix} 0 & 1\end{bmatrix}.
\end{split}
\label{eq:c matrix}
\end{align}

After substituting (\ref{eq:perturbations}) into~(\ref{eq:HV system},\ref{eq:HVcon}) and~(\ref{eq:CAV system w/ lag},\ref{eq:CCC general}), we linearize the nonlinear functions $V(.)$ and $W(.)$ by assuming that ${0 < v_i < v_{\rm max}}$, and derive the corresponding linearized models: 
\begin{align}
\begin{split}
    \dot{x}_0(t) &= \mathbf{a}_0 x_0(t) + \mathbf{b}_{0,1} \Tilde{v}_1(t) +  \sum_{k \in \Phi} \mathbf{b}_{0,k} \Tilde{v}_k(t), 
    \\
    \dot{x}_i(t) &= \mathbf{a}_i x_i(t) + \mathbf{a}_{i \tau} x_i(t-\tau) 
    \\
    &+  \mathbf{b}_i \Tilde{v}_{i+1}(t) + \mathbf{b}_{i \tau} \Tilde{v}_{i+1}(t-\tau),\ \forall i \in \{1,...,n\},
\end{split}
\label{eq:linearized dynamics}
\end{align}
with coefficient matrices ${\mathbf{a}_0,\mathbf{b}_{0,1},\mathbf{b}_{0,k}, \mathbf{a}_i, \mathbf{a}_{i \tau}, \mathbf{b}_i, \mathbf{b}_{i \tau}}$ shown in (\ref{eq:coefficient matrices}) in Appendix~\ref{app:A}.
By applying the Laplace transform with zero initial condition, (\ref{eq:c matrix},\ref{eq:linearized dynamics}) lead to:
\begin{align}
\begin{split}
    \Tilde{V}_0(s) &= T_{0,1}(s) \Tilde{V}_{1}(s) + \sum_{k \in \Phi} T_{0,k}(s) \Tilde{V}_{k}(s), 
    \\
    \Tilde{V}_1(s) &= \prod_{i=1}^{n} T_{i,i+1}(s) \Tilde{V}_{n+1}(s),\\
    \Tilde{V}_k(s) &= \prod_{i=k}^{n} T_{i,i+1}(s) \Tilde{V}_{n+1}(s).
\end{split}
\label{eq:Laplace transform}
\end{align}
Here ${\Tilde{V}_0(s)}$, ${\Tilde{V}_{1}(s)}$, ${\Tilde{V}_{k}(s)}$ and ${\Tilde{V}_{n+1}(s)}$ denote the Laplace transforms of the speed perturbation of the CAV $\Tilde{v}_0$, the preceding vehicles $\Tilde{v}_{1}$, $\Tilde{v}_{k}$ and the head vehicle $\Tilde{v}_{n+1}$, respectively, while the link transfer functions are defined as:
\begin{align}
\begin{split}
    T_{0,1}(s) &= \mathbf{c}_0 (s \mathbf{I}-\mathbf{a}_0)^{-1} \mathbf{b}_{0,1}, 
    \\
    T_{0,k}(s) &= \mathbf{c}_0 (s \mathbf{I}-\mathbf{a}_0)^{-1} \mathbf{b}_{0,k}, 
    \\
    T_{i,i+1}(s) &= \mathbf{c}_i (s \mathbf{I}-\mathbf{a}_{i}-\mathbf{a}_{i\tau} e^{-s\tau})^{-1} (\mathbf{b}_{i}+\mathbf{b}_{i\tau}e^{-s\tau}),
\end{split}
\label{eq:link transfer function general}
\end{align}
for ${i \in \{1,...,n\}}$ and ${k \in \Phi}$; see~\cite{zhang2016motif}. 
These link transfer functions can be obtained through the substitution of the coefficient matrices in (\ref{eq:coefficient matrices}), and their detailed expression can be found in (\ref{eq:link transfer function detail}) in Appendix~\ref{app:A}.

The overall response of the vehicle chain considering speed perturbations from the head vehicle ${\Tilde{V}_{n+1}(s)}$ to the tail vehicle ${\Tilde{V}_0(s)}$ can be described by:
\begin{equation}
    \Tilde{V}_0(s) =  G_{0,n+1}(s) \Tilde{V}_{n+1}(s),
\label{eq:h2t transfer func definition}
\end{equation}
where the head-to-tail transfer function can be expressed as:
\begin{align}
    G_{0,n+1}(s) \! = \! T_{0,1}(s) \! \prod_{i=1}^{n}T_{i,i+1}(s) \! + \!  \sum_{k \in \Phi} \!  \big( T_{0,k}(s) \! \prod_{i=k}^{n} \! T_{i,i+1}(s) \big).
\label{eq:h2t transfer func general}
\end{align}
Considering the case shown in Fig.~\ref{fig:3 veh setup}(a), i.e., CCC~(\ref{eq:CCC 3veh}) with ${\Phi=\{n+1\}}$, (\ref{eq:h2t transfer func general}) simplifies to:
\begin{align}
    G_{0,n+1}(s) = T_{0,1}(s)\prod_{i=1}^{n}T_{i,i+1}(s) + T_{0,n+1}(s).
\label{eq:h2t transfer func 1,n+1}
\end{align}

Using the head-to-tail transfer function, we can analyze the plant stability and head-to-tail string stability of the vehicle chain~\cite{guo2023connected}.
Plant stability is associated with the characteristic equation:
\begin{equation}
    \mathrm{D}(G_{0,n+1}(s)) = 0,
\label{eq:chara eqn general}
\end{equation}
where $\mathrm{D}(.)$ denotes the denominator. 
Plant stability is achieved when all characteristic roots have a negative real part, i.e., ${\mathrm{Re}(s_m) < 0,\ \forall m \in \mathbb{N}}$.
If a real root ${s = 0}$ is located on the imaginary axis, the system reaches the plant stability boundary:
\begin{equation}
    \mathrm{D}(G_{0,n+1}(0)) = 0.
\label{eq:plant stability boundary, s=0}
\end{equation}
Alternatively, the stability boundary is also reached if a complex conjugate pair of roots ${s = \pm \rm{j} \Omega}$, with ${\rm{j}^2 = -1}$ and ${\Omega > 0}$, is located at the imaginary axis, yielding:
\begin{align}
\begin{split}
    \mathrm{Re}(\mathrm{D}(G_{0,n+1}(\rm{j}\Omega))) &= 0, 
    \\
    \mathrm{Im}(\mathrm{D}(G_{0,n+1}(\rm{j}\Omega))) &= 0,
\end{split}
\label{eq:plant stability boundary, s=jw}
\end{align}
where $\mathrm{Re}(.)$ and $\mathrm{Im}(.)$ denote the real and imaginary part, respectively. 

Head-to-tail string stability, on the other hand, describes whether velocity fluctuations are mitigated along the vehicle chain.
According to~(\ref{eq:h2t transfer func definition}), a velocity perturbation with frequency ${\omega > 0}$ from the head vehicle will be amplified by the ratio ${\left | G_{0,n+1}(\rm{j}\omega) \right |}$ when reaching the tail vehicle. 
Therefore, head-to-tail string stability can be achieved if and only if:
\begin{equation}
    \left | G_{0,n+1}(\rm{j}\omega) \right | < 1,\ \forall \omega > 0.
\label{eq:string stability condition general}
\end{equation}
Equation~(\ref{eq:string stability condition general}) is equivalent to ${P(\omega) > 0}$ with:
\begin{equation}
    P(\omega) := \frac{1}{\omega^2} \big( \mathrm{D}(\left | G_{0,n+1}(\rm{j}\omega) \right |^2) - \mathrm{N}(\left | G_{0,n+1}(\rm{j}\omega) \right |^2) \big),
\label{eq:P(w)}
\end{equation}
where ${\mathrm{D}(.)}$ and ${\mathrm{N}(.)}$ denote denominator and numerator.
For ${\omega \rightarrow 0}$, the string stability boundaries can be obtained by:
\begin{equation}
    \lim_{\omega \rightarrow 0}P(\omega) = 0.
\label{eq:string stability boundary, w=0}
\end{equation}
For ${\omega>0}$, a set of string stability boundaries, parameterized by the wave number ${K \in [0, 2\pi)}$, is provided in~\cite{molnar2022virtual}:
\begin{equation}
    G_{0,n+1}(\mathrm{j} \omega) = \frac{n_0(\omega)+\mathrm{j}n_1(\omega)}{d_0(\omega)+\mathrm{j}d_1(\omega)} = e^{-\mathrm{j}K},
\label{eq:string stability boundary, w>0}
\end{equation}
where $n_0(\omega)$ and $n_1(\omega)$ are the real and imaginary parts of ${\mathrm{N}(G_{0,n+1}(\mathrm{j}\omega))}$ while $d_0(\omega)$ and $d_1(\omega)$ are those of ${\mathrm{D}(G_{0,n+1}(\mathrm{j}\omega))}$. 

Overall, (\ref{eq:plant stability boundary, s=0},\ref{eq:plant stability boundary, s=jw}) define the plant stability boundaries, while (\ref{eq:string stability boundary, w=0},\ref{eq:string stability boundary, w>0}) specify the head-to-tail string stability boundaries, as a function of the CCC parameters such as $A$, $B_1$, $B_k$. 
Appendix~\ref{app:A} provides the detailed formulas for these boundaries.
Using the expressions for the stability boundaries, one can construct \textit{stability charts} to illustrate the regions of the control parameters that provide plant and head-to-tail string stability for a vehicle chain, thus aiding stable controller design.

Stability charts in the ${(A,B_1)}$ and ${(B_1,B_2)}$ planes are provided in Fig.~\ref{fig: stability boundary analytical} for the scenario where a CAV responds to the CHV two vehicles ahead, using the parameters in Table~\ref{tab:parameters}. 
Note that we consider every HV to be identical with ${A_i=A_{\rm h}}$, ${B_i=B_{\rm h}}$ in (\ref{eq:HVcon}) and ${\kappa_i=\kappa_{\rm h}}$ in (\ref{eq:HV range policy}) for ${i \in \{1,...,n\}}$ in the following examples of stability charts.
In panels (a-b), the head-to-tail string stable region (blue) is enclosed by the boundaries~(\ref{eq:string stability boundary, w=0}) (blue lines) and~(\ref{eq:string stability boundary, w>0}) (rainbow colored curves for various values of $K$). 
This region can be concisely depicted as the string stable domains in panels (c-d). 
The plant stable domains (red) are also plotted below the string stable domains, with boundaries determined by~(\ref{eq:plant stability boundary, s=0},\ref{eq:plant stability boundary, s=jw}). 
Note that the vehicle chain is string stable only when it is plant stable.

%%%%%%%%%%%%%%%%%%%%%%%%%%%%%%%%%%%%%%%%%%%%%%%%%%%%%%%%%%%%
\begin{figure}[t]
    \centering
    \includegraphics{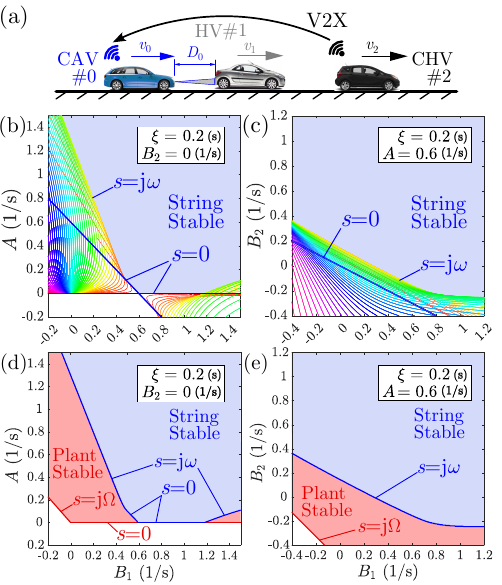}
    \vspace{-2mm}
    \caption{Stability charts of the CCC (\ref{eq:CAV system w/ lag},\ref{eq:CCC 3veh}) for ${\xi=0.2\, \rm{s}}$ and ${n=1}$ in (b) ${(A,B_1)}$ plane, (c) ${(B_1,B_2)}$ plane. (d-e) Simplified string stable domains, along with plant stable regions.}
    \label{fig: stability boundary analytical}
    \vspace{-4mm}
\end{figure}
%%%%%%%%%%%%%%%%%%%%%%%%%%%%%%%%%%%%%%%%%%%%%%%%%%%%%%%%%%%%

%%%%%%%%%%%%%%%%%%%%%%%%%%%%%%%%%%%%%%%%%%%%%%%%%%%%%%%%%%%%%%%%%%%%%%%%%%%%%%%%
\section{Background on Control Barrier Functions}
\label{sec:CBF}
%%%%%%%%%%%%%%%%%%%%%%%%%%%%%%%%%%%%%%%%%%%%%%%%%%%%%%%%%%%%%%%%%%%%%%%%%%%%%%%%

In this section, we revisit the theory of control barrier functions, which is used to construct the safety of the CCC~(\ref{eq:CCC general}).

Consider a control-affine system: 
\begin{equation}
    \dot{x} = f(x) + g(x) u,
\label{eq:open loop sys}
\end{equation}
where ${x \in \mathbb{R}^n}$, ${u \in \mathbb{R}^m}$ denote the state and the input, respectively, and ${f: \mathbb{R}^n \to \mathbb{R}^n}$ and ${g: \mathbb{R}^n \to \mathbb{R}^{n \times m}}$ are locally Lipschitz continuous functions representing the dynamics.
The closed-loop system is established given a locally Lipschitz continuous controller ${k:\mathbb{R}^n \to \mathbb{R}^m}$, ${u=k(x)}$:
\begin{equation}
    \dot{x}=f(x)+g(x)k(x),
\label{eq:close loop sys}
\end{equation}
where $x(t)$ represents the solution starting from the initial condition ${x(t_0)}$.

A {\em safe set} $\mathcal{S}$ is defined to determine the safety of system~(\ref{eq:close loop sys}):
\begin{equation}
    \mathcal{S} = \left \{ x \in \mathbb{R}^n: h(x) \geq 0 \right \},
\label{eq:safe set}
\end{equation}
where ${h:\mathbb{R}^n \to \mathbb{R}}$ is a continuously differentiable function.
With that, system~(\ref{eq:close loop sys}) is considered safe w.r.t.~$\mathcal{S}$ if ${x(t_0) \in \mathcal{S} \implies x(t) \in \mathcal{S}}$, ${\forall t \geq 0}$, and its safety can be guaranteed by satisfying Nagumo's theorem \cite{nagumo1942lage} given below.

\begin{theorem}[\hspace{2sp}\cite{nagumo1942lage}]\label{theo:nagumo}
\textit{
Let $h$ satisfy ${\nabla h(x) \neq 0}$ for all ${x \in \mathbb{R}^n}$ such that ${h(x)=0}$. System (\ref{eq:close loop sys}) is safe w.r.t. $\mathcal{S}$ if and only if:
\begin{equation}
    \dot{h} \big( x, k(x) \big) \geq 0, \quad
    \forall x \in \mathbb{R}^n\ {\rm s.t.}\ h(x) = 0,
\label{eq:nagumo}
\end{equation}
where:
\begin{equation}
    \dot{h} \big( x, k(x) \big) = \underbrace{\nabla h(x)f(x)}_{L_fh(x)} + \underbrace{\nabla h(x)g(x)}_{L_gh(x)}k(x),
\label{eq:hdot}
\end{equation}
and ${L_fh(x)}$ and ${L_gh(x)}$ denote the Lie derivatives.
}
\end{theorem}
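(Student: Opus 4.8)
The statement is Nagumo's invariance theorem specialized to the closed loop~(\ref{eq:close loop sys}) with locally Lipschitz feedback, so the plan is to prove the two implications separately. For \emph{necessity}, suppose system~(\ref{eq:close loop sys}) is safe w.r.t.\ $\mathcal{S}$ (i.e.\ $\mathcal{S}$ is forward invariant) and pick any boundary initial condition $x(t_0)$ with $h(x(t_0))=0$. Since $f$, $g$, $k$ are locally Lipschitz, the closed loop has a unique solution $x(\cdot)$ through $x(t_0)$ that is $C^1$ in time, and by the chain rule $\tfrac{d}{dt}h(x(t))=\nabla h(x(t))\big(f(x(t))+g(x(t))k(x(t))\big)=\dot h\big(x(t),k(x(t))\big)$. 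Forward invariance gives $h(x(t))\ge 0=h(x(t_0))$ for $t\ge t_0$, so each difference quotient $\big(h(x(t))-h(x(t_0))\big)/(t-t_0)$ with $t>t_0$ is nonnegative; letting $t\downarrow t_0$ and using differentiability yields $\dot h(x(t_0),k(x(t_0)))\ge 0$. This is just the statement that the (one-sided, hence two-sided) derivative of $h\circ x$ at a minimizer is nonnegative, and it needs nothing beyond the assumed regularity.

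For \emph{sufficiency}, assume $\dot h(x,k(x))\ge 0$ whenever $h(x)=0$ and suppose, for contradiction, that there are $x(t_0)\in\mathcal{S}$ and $t_1>t_0$ with $h(x(t_1))<0$ along the unique closed-loop trajectory. The naive first-exit argument — let $t_\ast\in(t_0,t_1)$ be the last time with $h(x(t))=0$, so that $h(x(t))<0$ for $t\in(t_\ast,t_1]$ and hence $\tfrac{d}{dt}h(x(t))|_{t_\ast}\le 0$, contradicting $\dot h(x(t_\ast),k(x(t_\ast)))\ge 0$ — only delivers $\dot h(x(t_\ast),k(x(t_\ast)))=0$, which is consistent with the trajectory leaving $\mathcal{S}$ tangentially. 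Ruling out this tangential case is the crux, and I would do it with a strictly-inward perturbation. Fix a compact set $C$ containing a neighborhood of $\{x(t):t\in[t_0,t_1]\}$ and a smooth (or merely Lipschitz) cutoff $\sigma$ equal to $1$ on a neighborhood of $\{h=0\}\cap C$ and of compact support; for $\epsilon>0$ consider
\begin{equation}
  \dot z = f(z)+g(z)k(z)+\epsilon\,\sigma(z)\,\big(\nabla h(z)\big)^{\!\top}.
\label{eq:nagumo-pert}
\end{equation}
Here the hypothesis $\nabla h\neq 0$ on $\{h=0\}$ is exactly what makes the perturbation point strictly into $\mathcal{S}$ there: on $\{h=0\}\cap C$ one has $\sigma(z)\|\nabla h(z)\|^2>0$, bounded below by a positive constant.

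I would then run the first-exit argument on~(\ref{eq:nagumo-pert}). Let $z_\epsilon(\cdot)$ solve it with $z_\epsilon(t_0)=x(t_0)$; for $\epsilon$ small enough $z_\epsilon$ stays in $C$ on $[t_0,t_1]$ by a Gronwall estimate. If $h(z_\epsilon)$ ever became negative, then at the first such exit time $t_\ast$ we would have $h(z_\epsilon(t_\ast))=0$ and $\tfrac{d}{dt}h(z_\epsilon(t))|_{t_\ast}\le 0$, whereas the chain rule gives $\dot h\big(z_\epsilon(t_\ast),k(z_\epsilon(t_\ast))\big)+\epsilon\,\sigma(z_\epsilon(t_\ast))\|\nabla h(z_\epsilon(t_\ast))\|^2\ge 0+\epsilon\|\nabla h(z_\epsilon(t_\ast))\|^2>0$, a genuine contradiction. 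Hence $h(z_\epsilon(t))\ge 0$ on $[t_0,t_1]$ for every small $\epsilon>0$. Finally, letting $\epsilon\to 0$, the right-hand side of~(\ref{eq:nagumo-pert}) converges uniformly on $C$ to $f+gk$, which is locally Lipschitz and hence generates the unique solution $x(\cdot)$; by the standard continuous-dependence estimate, $z_\epsilon(t)\to x(t)$ uniformly on $[t_0,t_1]$, so $h(x(t))=\lim_{\epsilon\to 0}h(z_\epsilon(t))\ge 0$, contradicting $h(x(t_1))<0$. Therefore $\mathcal{S}$ is forward invariant.

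Two technical points I would be careful with. First, $h\in C^1$ only makes $\nabla h$ continuous, so to invoke classical continuous dependence cleanly I would replace $\nabla h$ in~(\ref{eq:nagumo-pert}) by a globally Lipschitz (mollified) field agreeing with it up to $O(\epsilon)$ on $C$; since $\|\nabla h\|$ is bounded below on $\{h=0\}\cap C$, the perturbation still points strictly inward there. Second, the order of quantifiers matters: $C$ and $\sigma$ are fixed from the unperturbed trajectory first, and only then is $\epsilon$ sent to $0$. I expect the only real obstacle to be the tangential case above — eliminating it is precisely what forces the inward perturbation and the limiting argument, while the rest is routine ODE bookkeeping.
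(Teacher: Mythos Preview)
The paper does not prove this statement at all: Theorem~\ref{theo:nagumo} is quoted as a classical result and simply attributed to \cite{nagumo1942lage}, so there is no in-paper argument to compare your proposal against. Your write-up is essentially a correct proof of the quoted statement. The necessity direction is immediate, as you note. For sufficiency, you correctly identify the tangential-exit obstruction that kills the naive first-exit argument, and your remedy---adding the strictly inward perturbation $\epsilon\,\sigma(z)(\nabla h(z))^{\top}$, proving invariance for the perturbed flow, and passing to the limit via continuous dependence---is exactly the standard device used in modern proofs of Nagumo's theorem. The hypothesis $\nabla h(x)\neq 0$ on $\{h=0\}$ is used precisely where you use it, to give a uniform positive lower bound on $\|\nabla h\|^2$ on the compact boundary piece so that the perturbed derivative is strictly positive there. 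Your two caveats (mollifying $\nabla h$ to get Lipschitz dependence, and fixing the compact neighborhood and cutoff before sending $\epsilon\to 0$) are the right things to flag and do not hide any difficulty. One small tightening: define the exit time as $t_\ast=\inf\{t\ge t_0:h(z_\epsilon(t))<0\}$ rather than ``the last time with $h=0$,'' since the latter phrasing can be ambiguous; with the infimum definition, continuity gives $h(z_\epsilon(t_\ast))=0$ and a one-sided difference quotient $\le 0$, which is all you need.
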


Condition (\ref{eq:nagumo}) is established at the boundary of $\mathcal{S}$, i.e., ${h(x)=0}$, which prevents the system from leaving the safe set. However, it does not specify the system behavior within $\mathcal{S}$, i.e., ${h(x)>0}$. To address this, {\em control barrier functions} (CBFs)~\cite{AmesXuGriTab2017} have been proposed to enable safety-critical controller synthesis for~(\ref{eq:open loop sys}).

\begin{definition}[\hspace{2sp}\cite{AmesXuGriTab2017}]
Function $h$ is a {\em control barrier function} for (\ref{eq:open loop sys}) on $\mathcal{S}$ if there exists ${\alpha \in \mathcal{K}^{\rm e}}$ such that for all ${x \in \mathcal{S}}$:
\begin{equation}
    \sup_{u \in \mathbb{R}^m} \dot{h}(x,u) > - \alpha \big( h(x) \big).
\label{eq:CBF_condition}
\end{equation}
\end{definition}

\begin{remark}
\label{re:class-k functions}
A continuous function ${\alpha: \mathbb{R} \to \mathbb{R}}$ is of extended class-$\mathcal{K}$ (${\alpha \in \mathcal{K}^{\rm e}}$) if it is strictly increasing and ${\alpha(0)=0}$.
For example, the extended class-$\mathcal{K}$ function $\alpha$ is often chosen to be a linear function, ${\alpha(r)=\gamma r}$, with some ${\gamma>0}$.
\end{remark}

\begin{theorem}[\hspace{2sp}\cite{AmesXuGriTab2017}] \label{theo:CBF}
\textit{
If $h$ is a CBF for~(\ref{eq:open loop sys}) on $\mathcal{S}$, then any locally Lipschitz continuous controller $k$ that satisfies: 
\begin{equation}
    \dot{h} \big( x, k(x) \big) \geq - \alpha \big( h(x) \big)
\label{eq:safety_condition}
\end{equation}
for all ${x \in \mathcal{S}}$ renders~(\ref{eq:close loop sys}) safe w.r.t.~$\mathcal{S}$.
}
\end{theorem}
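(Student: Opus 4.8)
The plan is to reduce the claim to Nagumo's theorem (Theorem~\ref{theo:nagumo}), which has just been stated. First I would observe that since $f$, $g$, and $k$ are locally Lipschitz, the closed-loop vector field $x \mapsto f(x) + g(x)k(x)$ in~(\ref{eq:close loop sys}) is locally Lipschitz, so from any $x(t_0)$ there is a unique maximal solution $x(t)$; it then suffices to show that this solution cannot leave $\mathcal{S}$ through the boundary $\partial\mathcal{S} = \{x : h(x) = 0\}$.

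Next I would verify the two hypotheses needed to apply Theorem~\ref{theo:nagumo} to the closed loop. (i) Nondegeneracy, $\nabla h(x) \neq 0$ whenever $h(x) = 0$: if instead $\nabla h(x) = 0$ at such a point, then $\dot h(x,u) = \nabla h(x)(f(x) + g(x)u) = 0$ for every $u$, so $\sup_{u} \dot h(x,u) = 0 = -\alpha(0) = -\alpha(h(x))$, contradicting the strict inequality~(\ref{eq:CBF_condition}) in the definition of a CBF; hence the hypothesis of Theorem~\ref{theo:nagumo} holds. (ii) Nagumo's inequality~(\ref{eq:nagumo}) holds along the closed loop: any $x$ with $h(x) = 0$ lies in $\mathcal{S} = \{x : h(x) \geq 0\}$, so the controller satisfies~(\ref{eq:safety_condition}) there, i.e. $\dot h(x,k(x)) \geq -\alpha(h(x)) = -\alpha(0) = 0$, using $\alpha(0) = 0$ from Remark~\ref{re:class-k functions}; this is exactly~(\ref{eq:nagumo}).

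With both hypotheses in hand, Theorem~\ref{theo:nagumo} applied to~(\ref{eq:close loop sys}) gives $x(t_0) \in \mathcal{S} \implies x(t) \in \mathcal{S}$ on the solution's maximal interval of existence, which is the asserted safety. A self-contained alternative that bypasses Nagumo is to set $w(t) = h(x(t))$, note $\dot w(t) \geq -\alpha(w(t))$ along the solution, and compare with the scalar ODE $\dot z = -\alpha(z)$, $z(t_0) = w(t_0) \geq 0$, whose solution remains nonnegative because $\alpha$ is strictly increasing with $\alpha(0) = 0$; a comparison argument then yields $w(t) \geq z(t) \geq 0$. In either route the only delicate point is the same mild technicality: keeping the argument valid on the maximal interval of existence and, for the comparison route, handling nonuniqueness of the comparison ODE when $\alpha$ is merely continuous — resolved by the monotonicity of $\alpha$, which forces $z \equiv 0$ to act as a lower barrier. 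I expect the Nagumo route to be the cleanest here, since Theorem~\ref{theo:nagumo} is already available.
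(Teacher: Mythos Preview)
The paper does not prove this theorem; it is quoted from~\cite{AmesXuGriTab2017} as background and stated without proof. Your proposal is correct and follows the standard route: extract the nondegeneracy hypothesis of Theorem~\ref{theo:nagumo} from the strict inequality in the CBF definition~(\ref{eq:CBF_condition}), specialize the safety condition~(\ref{eq:safety_condition}) to the boundary where $\alpha(h(x)) = \alpha(0) = 0$, and invoke Nagumo; the comparison-lemma alternative you sketch is equally standard.
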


CBFs are often employed in safety filters that minimally modify a desired but not necessarily safe controller $ {\kd:\mathbb{R}^n \to \mathbb{R}^m} $ to a safe one, by solving the following quadratic programming problem with constraint (\ref{eq:safety_condition}):
\begin{align}
\begin{split}
    k(x) = \underset{u \in \mathbb{R}^m}{\operatorname{argmin}} & \quad \| u - \kd(x) \|^2 
    \\
    \text{s.t.} & \quad \dot{h}(x,u) \geq - \alpha \big( h(x) \big).
\end{split}
\label{eq:QP}
\end{align}
The closed form solution of (\ref{eq:QP}) was given in \cite{alan2023control} as:
\begin{align}
%\begin{split}
    & k(x)= \begin{cases}
    k_{\rm d}(x) + \max \left \{0,\eta(x)\right \}\frac{L_gh(x)^\top }{\left \| L_gh(x) \right \|^2}, &\mathrm{if}\ L_gh(x) \neq 0, 
    \\ 
    \kd(x), & \mathrm{if}\ L_gh(x)=0,
    \end{cases} \nonumber
    \\
    & \eta(x) = -L_fh(x) - L_gh(x)k_{\rm d}(x) - \alpha h(x).
%\end{split}
\label{eq:QP close form}
\end{align}
For scalar input $u$ like in (\ref{eq:CAV system w/ lag}), the safety filter simplifies to: 
\begin{align}
\begin{split}
    & k(x)= \begin{cases}
    \min \left \{ \kd(x), \ks(x) \right \}, &\mathrm{if}\ L_gh(x) \leq 0, \\ 
    \kd(x), & \mathrm{if}\ L_gh(x)=0, \\
    \max \left \{ \kd(x), \ks(x) \right \}, &\mathrm{if}\ L_gh(x) \geq 0,
    \end{cases}. \\
\end{split}
\label{eq:QP single input}
\end{align}
with:
\begin{equation}
    \ks(x)= - \frac{L_fh(x) + \alpha \big( h(x) \big)}{L_gh(x)}.
\label{eq:ks general}    
\end{equation}

Note that when ${L_g h(x) \equiv 0}$, the safety of the system (\ref{eq:close loop sys}) is not directly affected by the controller $k(x)$, according to~(\ref{eq:hdot}).
Therefore, $h$ is neither a valid CBF nor applicable for synthesizing safety-critical controllers. 
In such a case, an {\em extended CBF} can be constructed~\cite{Nguyen2016,xiao2019cbf,cohen2024ROMCBF}:
\begin{equation}
    h_{\rm e}(x) = L_f h(x) + \alpha \big( h(x) \big),
\label{eq:CBFextension}
\end{equation}
which is associated with the {\em extended safe set} $\mathcal{S}_{\rm e}$:
\begin{equation}
    \mathcal{S}_{\rm e} = \{x \in \mathbb{R}^n: h_{\rm e}(x) \geq 0 \}.
\label{eq:safeset_extended}
\end{equation}
Condition~(\ref{eq:safety_condition}) is satisfied when the system remains within $\mathcal{S}_{\rm e}$.
Therefore, safety w.r.t.~the intersection ${\mathcal{S} \cap \mathcal{S}_{\rm e}}$ of the two sets can be established as follows.

\begin{corollary}[\hspace{2sp}\cite{xiao2019cbf}] \label{cor:extendedCBF}
\textit{
If ${L_g h(x) \equiv 0}$ and $h_{\rm e}$ in~(\ref{eq:CBFextension}) is a CBF for~(\ref{eq:open loop sys}) on $\mathcal{S}_{\rm e}$ with ${\alpha_{\rm e} \in \mathcal{K}^{\rm e}}$, then any locally Lipschitz continuous controller $k$ that satisfies: 
\begin{equation}
    \dot{h}_{\rm e} \big( x, k(x) \big) \geq - \alpha_{\rm e} \big(  h_{\rm e}(x) \big)
\label{eq:safety_condition_extended}
\end{equation}
for all ${x \in \mathcal{S}_{\rm e}}$ renders~(\ref{eq:close loop sys}) safe w.r.t.~${\mathcal{S} \cap \mathcal{S}_{\rm e}}$.
}
\end{corollary}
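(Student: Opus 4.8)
The plan is to reduce the claim to two nested forward-invariance arguments: first establishing invariance of the extended set $\mathcal{S}_{\rm e}$ via Theorem~\ref{theo:CBF}, and then deducing invariance of $\mathcal{S}$ by a comparison-lemma argument that crucially exploits ${L_gh(x)\equiv 0}$.

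First, I would note that by hypothesis $h_{\rm e}$ is a CBF for \eqref{eq:open loop sys} on $\mathcal{S}_{\rm e}$ and the locally Lipschitz controller $k$ satisfies \eqref{eq:safety_condition_extended} for all ${x\in\mathcal{S}_{\rm e}}$. Applying Theorem~\ref{theo:CBF} with the roles of $h$, $\mathcal{S}$, $\alpha$ played by $h_{\rm e}$, $\mathcal{S}_{\rm e}$, $\alpha_{\rm e}$, the closed-loop system \eqref{eq:close loop sys} is safe w.r.t. $\mathcal{S}_{\rm e}$: if ${x(t_0)\in\mathcal{S}_{\rm e}}$ then ${h_{\rm e}(x(t))\ge 0}$ for all ${t\ge t_0}$. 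Existence and right-uniqueness of $x(t)$ here follow from local Lipschitzness of $f$, $g$, and $k$.

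Second, along any such trajectory I would differentiate $h$. Since ${L_gh(x)\equiv 0}$, equation \eqref{eq:hdot} gives ${\dot h(x(t),k(x(t)))=L_fh(x(t))}$, and by the definition of the extended CBF in \eqref{eq:CBFextension}, ${L_fh(x(t))=h_{\rm e}(x(t))-\alpha(h(x(t)))}$. Combining this identity with the invariance ${h_{\rm e}(x(t))\ge 0}$ from the first step yields ${\dot h(x(t),k(x(t)))\ge -\alpha(h(x(t)))}$ for all ${t\ge t_0}$, i.e. exactly the closed-loop inequality \eqref{eq:safety_condition} evaluated along the solution. With ${\alpha\in\mathcal{K}^{\rm e}}$, the scalar comparison system ${\dot y=-\alpha(y)}$ with ${y(t_0)=h(x(t_0))\ge 0}$ has a solution that remains nonnegative, so the comparison lemma gives ${h(x(t))\ge 0}$, i.e. ${x(t)\in\mathcal{S}}$, for all ${t\ge t_0}$ whenever ${x(t_0)\in\mathcal{S}}$. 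Equivalently, one may invoke Nagumo's theorem (Theorem~\ref{theo:nagumo}) at points where ${h(x)=0}$, where ${h_{\rm e}(x)\ge 0}$ forces ${\dot h(x,k(x))\ge 0}$. Putting the two steps together, ${x(t_0)\in\mathcal{S}\cap\mathcal{S}_{\rm e}}$ implies ${x(t)\in\mathcal{S}_{\rm e}}$ (step one) and ${x(t)\in\mathcal{S}}$ (step two) for all ${t\ge t_0}$, hence ${x(t)\in\mathcal{S}\cap\mathcal{S}_{\rm e}}$, which is the asserted safety.

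I expect the main obstacle to be the bookkeeping in the second step: one must be careful that $h$ is only assumed continuously differentiable and is \emph{not} itself a CBF (indeed ${L_gh\equiv 0}$ is precisely why the extension is introduced), so the differential inequality must be argued directly along closed-loop solutions rather than as a pointwise-in-$u$ statement, and one must keep the extended class-$\mathcal{K}$ function appearing in \eqref{eq:CBFextension} consistent with the one used throughout so that the two invariance steps chain correctly. The rest is a routine invocation of the comparison lemma / Nagumo's theorem already used to prove Theorem~\ref{theo:CBF}.
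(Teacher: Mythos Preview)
Your proposal is correct and follows essentially the same approach as the paper. The paper does not give a formal proof of this corollary (it is cited from \cite{xiao2019cbf}), but the sentence immediately preceding the statement---``Condition~(\ref{eq:safety_condition}) is satisfied when the system remains within $\mathcal{S}_{\rm e}$''---is precisely your two-step argument in compressed form: invariance of $\mathcal{S}_{\rm e}$ via Theorem~\ref{theo:CBF}, and then, since ${L_gh\equiv 0}$, the identity $\dot h = L_fh = h_{\rm e}-\alpha(h)\ge -\alpha(h)$ along trajectories in $\mathcal{S}_{\rm e}$, which yields invariance of $\mathcal{S}$.
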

\noindent Note that similar to Theorem~\ref{theo:CBF}, $\alpha_{\rm e}$ can be chosen, for example, to be linear as ${\alpha_{\rm e}(r) = \gamma_{\rm e}r}$ with ${\gamma_{\rm e} > 0}$.

Accordingly, Nagumo's theorem~(\ref{eq:nagumo}) is conducted at the boundary of ${\mathcal{S} \cap \mathcal{S}_{\rm e}}$, specifically when ${h(x) = 0}$ and ${h_{\rm e}(x) \geq 0}$, or when ${h_{\rm e}(x) = 0}$ and ${h(x) \geq 0}$. Note that the former case implies ${\dot{h} \big( x,k(x) \big) \geq 0}$.
It has already satisfied condition~(\ref{eq:nagumo}), and thereby requires no further analysis. 
The safety condition for the latter case is given as follows.
\begin{corollary}[\hspace{2sp}\cite{Tamas2023CDC}] \label{cor:extendedNagumo}
\textit{
Let ${L_g h(x) \equiv 0}$ and $h_{\rm e}$ in~(\ref{eq:CBFextension}) satisfy ${\nabla h_{\rm e}(x) \neq 0}$ for all ${x \in \mathbb{R}^n}$ such that ${h_{\rm e}(x) = 0}$.
System~(\ref{eq:close loop sys}) is safe w.r.t.~${\mathcal{S} \!\cap \!\mathcal{S}_{\rm e}}$ if:
\begin{align}
\begin{split}
    \dot{h}_{\rm e} \big( x, k(x) \big) \geq 0, \quad
    \forall x \in \mathbb{R}^n\ {\rm s.t.}\ h_{\rm e}(x) = 0\ \mathrm{and}\ h(x) \geq 0.
\end{split}
\label{eq:Nagumo_extended}
\end{align}
}
\end{corollary}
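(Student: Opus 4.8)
The plan is to establish forward invariance of $\mathcal{S} \cap \mathcal{S}_{\rm e}$ by checking the sub-tangentiality requirement of Nagumo's theorem (Theorem~\ref{theo:nagumo}) on the whole boundary of this intersection, and then to assemble the pieces. As the discussion preceding the statement already records, a point of $\partial(\mathcal{S} \cap \mathcal{S}_{\rm e})$ either satisfies $h(x) = 0$ with $h_{\rm e}(x) \geq 0$, or satisfies $h_{\rm e}(x) = 0$ with $h(x) \geq 0$, the ``corner'' set $\{x : h(x) = h_{\rm e}(x) = 0\}$ lying in both; it therefore suffices to verify a non-decrease condition for the active barrier on each of these portions.

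The portion $\{x : h(x) = 0,\ h_{\rm e}(x) \geq 0\}$ is where the hypothesis $L_g h(x) \equiv 0$ does the work. By~(\ref{eq:hdot}), $L_g h \equiv 0$ collapses $\dot h(x,k(x))$ to $L_f h(x)$, and by the definition~(\ref{eq:CBFextension}) of the extended barrier, $L_f h(x) = h_{\rm e}(x) - \alpha(h(x))$. Since $h(x) = 0$ and $\alpha(0) = 0$ on this portion, $\dot h(x,k(x)) = h_{\rm e}(x) \geq 0$, so the sub-tangentiality inequality~(\ref{eq:nagumo}) holds there with no extra assumption. On the portion $\{x : h_{\rm e}(x) = 0,\ h(x) \geq 0\}$, the hypothesis~(\ref{eq:Nagumo_extended}) is precisely the needed inequality $\dot h_{\rm e}(x,k(x)) \geq 0$. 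At a corner point both conclusions hold at once: $\dot h(x,k(x)) = h_{\rm e}(x) = 0 \geq 0$, and, because there $h(x) = 0 \geq 0$, also $\dot h_{\rm e}(x,k(x)) \geq 0$ by~(\ref{eq:Nagumo_extended}).

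To convert these pointwise sign conditions into invariance of the intersection, I would proceed in one of two equivalent ways. Either invoke a multi-constraint (viability) version of Nagumo's theorem, checking that the closed-loop vector field $f(x) + g(x)k(x)$ lies in the Bouligand tangent cone of $\mathcal{S} \cap \mathcal{S}_{\rm e}$ at every boundary point, where that cone is cut out by $\nabla h(x)\,v \geq 0$ (if $h(x) = 0$) and $\nabla h_{\rm e}(x)\,v \geq 0$ (if $h_{\rm e}(x) = 0$) — exactly the inequalities verified above. Or, more elementarily, apply the nonsmooth Nagumo/comparison argument to $\bar h(x) := \min\{h(x),\,h_{\rm e}(x)\}$, whose zero-superlevel set is $\mathcal{S} \cap \mathcal{S}_{\rm e}$, noting that at a boundary point the lower Dini derivative of $t \mapsto \bar h(x(t))$ along the flow is bounded below by the $\min$ of the (nonnegative) derivatives of the active barrier(s), hence is $\geq 0$.

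The main obstacle is precisely the non-smoothness of $\partial(\mathcal{S} \cap \mathcal{S}_{\rm e})$ at the corner set $\{h = h_{\rm e} = 0\}$: the tangent-cone description of an intersection of sublevel sets is valid only under a constraint qualification (e.g.\ positive linear independence of $\nabla h(x)$ and $\nabla h_{\rm e}(x)$ there), which the statement does not assume beyond $\nabla h_{\rm e}(x) \neq 0$; the $\min$-function route avoids this but trades it for careful handling of one-sided derivatives of a nonsmooth function along trajectories and the usual second-order delicacy of Nagumo sufficiency. A minor preliminary to dispatch is that $\dot h_{\rm e}$ be well defined along solutions, i.e.\ $h_{\rm e} = L_f h + \alpha \circ h$ is $C^1$; this holds under the standing regularity that $h$ is $C^2$, $f,g$ are $C^1$, and $\alpha \in \mathcal{K}^{\rm e}$ is locally Lipschitz, as is implicit throughout the CBF development here.
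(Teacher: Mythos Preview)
Your proposal is correct and follows essentially the same route as the paper: the paper does not give a formal proof of this corollary (it is cited from~\cite{Tamas2023CDC}), but the discussion immediately preceding the statement performs exactly your boundary decomposition, observing that on $\{h(x)=0,\ h_{\rm e}(x)\geq 0\}$ one gets $\dot h(x,k(x))\geq 0$ automatically from $L_g h\equiv 0$ and the definition~(\ref{eq:CBFextension}), so only the portion $\{h_{\rm e}(x)=0,\ h(x)\geq 0\}$ imposes the condition~(\ref{eq:Nagumo_extended}). Your treatment is in fact more careful than the paper's sketch, since you explicitly address the corner set and the constraint-qualification subtlety that the paper's discussion elides.
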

For scalar input $u$, as in system~(\ref{eq:CAV system w/ lag}), 
(\ref{eq:Nagumo_extended}) is equivalent to:
\begin{align}
\begin{split}
    \ks(x)-\kd(x) \geq 0, \quad
    \forall x \in \mathbb{R}^n\ {\rm s.t.}\ h_{\rm e}(x) = 0\ \mathrm{and}\ h(x) \geq 0.
\end{split}
\label{eq:Nagumo Lgh<0}
\end{align}
where:
\begin{equation}
    \ks(x) = -\frac{L_f h_{\rm e}(x) + \alpha_{\rm e} \big( h_{\rm e}(x) \big)}{L_g h_{\rm e}(x)},
\label{eq:ks_extended}
\end{equation}
cf.~(\ref{eq:ks general}), and formula~(\ref{eq:QP single input}) can be used as safety filter.

%%%%%%%%%%%%%%%%%%%%%%%%%%%%%%%%%%%%%%%%%%%%%%%%%%%%%%%%%%%%%%%%%%%%%%%%%%%%%%%%
\section{Safe Connected Cruise Control}
\label{sec:safccc}
%%%%%%%%%%%%%%%%%%%%%%%%%%%%%%%%%%%%%%%%%%%%%%%%%%%%%%%%%%%%%%%%%%%%%%%%%%%%%%%%

In this section, we first investigate the safety of the nominal CCC~(\ref{eq:CCC general}) via Corollary~\ref{cor:extendedNagumo}, then modify the nominal CCC to propose a safety-critical CCC law via Corollary~\ref{cor:extendedCBF}.

%%%%%%%%%%%%%%%%%%%%%%%%%%%%%%%%%%%%%%%%%%%%%%%%%%%%%%%%%%%%%%%%%%%%%%%%%%%%%%%%
\subsection{Safe Nominal CCC Design}
%%%%%%%%%%%%%%%%%%%%%%%%%%%%%%%%%%%%%%%%%%%%%%%%%%%%%%%%%%%%%%%%%%%%%%%%%%%%%%%%

Now we analyze the safety of the CAV dynamics~(\ref{eq:CAV system w/ lag}) executing the nominal CCC~(\ref{eq:CCC general}) based on condition~(\ref{eq:Nagumo Lgh<0}), and propose a guideline for safe nominal CCC design.

First, we append the expression of the front vehicle's acceleration, i.e., ${\dot{v}_{1}(t)  = a_{1}(t)}$, to the CAV dynamics~(\ref{eq:CAV system w/ lag}), and write the system in the form of~(\ref{eq:open loop sys}) with:
\begin{align}
\begin{split}
x=\begin{bmatrix}
    D_0 \\
    v_0 \\
    a_0 \\
    v_1
\end{bmatrix},\quad
f(x)=\begin{bmatrix}
    v_{1}-v_{0} \\
    a_{0} \\
    -\frac{a_{0}}{\xi} \\
    a_{1}
    \end{bmatrix},\quad  
g(x)=\begin{bmatrix}
    0 \\ 
    0 \\ 
    \frac{1}{\xi} \\
    0
    \end{bmatrix}.
\end{split}
\label{eq:control affine system}
\end{align}
Then, we select function $h$ to characterize the CAV's safety.
Several safety criteria were introduced in \cite{Tamas2023CDC}. In this analysis, we adopt the strictest criterion, which mandates maintaining the {\em time headway} $D_0/v_0$ of the CAV above a specified safe threshold defined by ${T_{\rm h} = 1/\kappa_{\rm{sf}}}$:
\begin{equation}
    h(x) = \kappa_{\rm{sf}} (D_0 - \Dsf) - v_0,
\label{eq:TH}
\end{equation}
where $\Dsf$ denotes the safe standstill distance.
This leads to:
\begin{align}
\begin{split}
    \nabla h(x) &= \begin{bmatrix} \kappa_{\rm{sf}} & -1 & 0 & 0\end{bmatrix} \neq 0,\\
    L_f h(x) &= \ksf (v_{1} - v_{0}) - a_{0},\quad L_g h(x) \equiv  0.
\end{split}
\label{eq:CBF derivative detail}
\end{align}

Since ${L_g h(x) \equiv  0}$, the extended CBF in~(\ref{eq:CBFextension}) can be constructed as:
\begin{equation}
    h_{\rm e}(x) = \ksf (v_{1} - v_{0}) - a_{0} + \gamma \big( \ksf (D_0 - \Dsf) - v_0 \big),
\label{eq:extend TH}
\end{equation}
where ${\gamma > 0}$ denotes the coefficient of the linear extended class-$\mathcal{K}$ function ${\alpha(r)=\gamma r}$.
Then, we have: 
\begin{align}
%\begin{split}
    \nabla h_{\rm e}(x) &= \begin{bmatrix} \gamma \ksf & -\ksf-\gamma & -1 & \ksf\end{bmatrix} \neq 0, \nonumber
    \\
    L_f h_{\rm e}(x) &= \ksf (a_{1} - a_{0}) + a_{0}/\xi + \gamma \big( \ksf (v_{1} - v_{0}) - a_{0} \big), \nonumber
    \\
    L_g h_{\rm e}(x) &= -1/\xi < 0.
%\end{split}
\label{eq:eCBF derivative detail}
\end{align}

Hence, Corollary~\ref{cor:extendedCBF} and \ref{cor:extendedNagumo} hold, i.e., (\ref{eq:Nagumo Lgh<0}) can be applied for safety analysis, whereas the safety filter (\ref{eq:QP single input}) becomes:
\begin{equation}
    k(x) = \min\{\kd(x),\ks(x)\},
\label{eq:SafCon}
\end{equation}
while (\ref{eq:ks_extended}) reads:
\begin{align}
\begin{split}
    \ks(x) = (1 - \xi \ksf) a_{0} + \xi \ksf a_{1} + \xi \gamma \big( \ksf (v_{1} - v_{0}) - a_{0} \big) & \\
    + \xi \gamma_{\rm e} \Big( \ksf (v_{1} - v_{0}) - a_{0} + \gamma \big( \ksf (D_0 - \Dsf) - v_0 \big) \Big) &.
\end{split}
\label{eq:extended SafFilter}
\end{align}
where ${\gamma_{\rm e} > 0}$ denotes the coefficient of the linear extended class-$\mathcal{K}$ function ${\alpha_{\rm e}(r)=\gamma_{\rm e} r}$.

To analyze the safety of the nominal CCC (\ref{eq:CCC general}), we apply Corollary~\ref{cor:extendedNagumo}. 
Specifically, by analyzing when condition (\ref{eq:Nagumo Lgh<0}) holds, safe choices of controller gains $A$, $B_1$ and ${B_k,\ \forall k \in \Phi}$ can be derived as follows.

%%%%%%%%%%%%%%%%%%%%%%%%%%%%%%%%%%%%%%%%%%%%%%%%%%%%%%%%%%%%
\begin{figure}[t]
    \centering
    \includegraphics{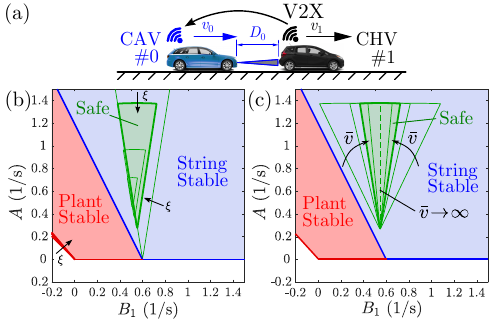}
    \vspace{-4mm}
    \caption{Safety charts of the nominal CCC (\ref{eq:CCC general}) when the CAV responds to the vehicle immediately ahead of it; see panel (a). The time headway criterion~(\ref{eq:TH}) is considered for (b) different CAV lags $\xi$ in system (\ref{eq:CAV system w/ lag}) and (c) different bounds $\bar{v}$ on the speed difference in (\ref{eq:Saf Region}) with ${\xi=0.15\, \rm{s}}$.}
    \label{fig:2veh Safety Chart}
    \vspace{-6mm}
\end{figure}
%%%%%%%%%%%%%%%%%%%%%%%%%%%%%%%%%%%%%%%%%%%%%%%%%%%%%%%%%%%%

\begin{theorem}
\label{theo:Saf Region A-B1-BN}
\textit{
System (\ref{eq:CAV system w/ lag}) with ${u_0 = \kd(x)}$ given by (\ref{eq:CCC general}) and ${A \geq 0}$, ${B_1 \geq 0}$, ${B_k \geq 0,\ \forall k \in \Phi}$ 
is safe ${w.r.t.\ \mathcal{S} \! \cap \! \mathcal{S}_{\rm{e}}}$ given by (\ref{eq:safe set},\ref{eq:safeset_extended},\ref{eq:TH},\ref{eq:extend TH}) if ${v_0 \geq 0}$, ${|v_1 - v_0| \! \leq \! \bar{v}}$, ${|v_{k} - v_0| \! \leq \! \bar{v}},\ {\forall k \! \in \! \Phi}$ with some ${\bar{v}>0}$, ${a_1 \geq - a_{\rm{min}}}$ with some ${a_{\rm{min}}>0}$, ${\Dst > \Dsf}$, ${\kappa_{\rm{sf}} \geq \kappa > 0}$, ${\gamma > 0}$ and:
\begin{align}
\begin{split}
    A &\leq \frac{(1-\xi \ksf)^2}{4 \xi} - \xi \bigg(\gamma - \frac{1-\xi\ksf}{2\xi} \bigg)^2, \\
    A &\geq \frac{\big( |\ksf-\xi \ksf^2 -B_1| + \sum_{k \in \Phi} B_k \big)\bar{v} + \xi \ksf a_{\rm{min}}}{\kappa(\Dst-\Dsf)}.
\end{split}
\label{eq:Saf Region}
\end{align}
}
\end{theorem}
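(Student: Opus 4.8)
The plan is to apply Corollary~\ref{cor:extendedNagumo}. Its structural hypotheses for the CBF candidates~(\ref{eq:TH}) and~(\ref{eq:extend TH}) have already been verified in the excerpt: $L_g h(x) \equiv 0$, $\nabla h_{\rm e}(x) \neq 0$, and $L_g h_{\rm e}(x) = -1/\xi < 0$, so the safety filter is $k(x) = \min\{\kd(x),\ks(x)\}$ and it suffices to check the scalar Nagumo condition~(\ref{eq:Nagumo Lgh<0}): $\ks(x) - \kd(x) \geq 0$ at every state on the surface $h_{\rm e}(x)=0$ with $h(x)\geq 0$, granting also the standing bounds $v_0\geq 0$, $|v_1-v_0|\leq\bar{v}$, $|v_k-v_0|\leq\bar{v}$, $a_1\geq -a_{\rm min}$. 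So the whole theorem reduces to a single pointwise inequality on that boundary set.

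First I would simplify $\ks(x)$ from~(\ref{eq:extended SafFilter}) on $\{h_{\rm e}(x)=0\}$: there the $\gamma_{\rm e}$-term equals $\xi\gamma_{\rm e}h_{\rm e}(x)$ and vanishes, while $h_{\rm e}(x)=0$ yields $\ksf(v_1-v_0)-a_0 = -\gamma h(x)$ and hence $a_0 = \ksf(v_1-v_0)+\gamma h(x)$; substituting these, $\ks(x)$ collapses to
\[
    \ks(x) = (\ksf - \xi\ksf^2)(v_1 - v_0) + \xi\ksf\, a_1 + \big(\gamma(1-\xi\ksf) - \xi\gamma^2\big)\, h(x).
\]
Next I would bound $\kd(x)$ in~(\ref{eq:CCC general}) from above on the same set. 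Using $V(D_0)\leq\kappa(D_0-\Dst)$, $W(v_1)\leq v_1$, $W(v_k)\leq v_k$, the sign conditions $A,B_1,B_k\geq 0$, and the fact that $h(x)\geq 0$ together with $v_0\geq 0$ forces $D_0-\Dsf\geq 0$ (so that $\kappa\leq\ksf$ gives $V(D_0)-v_0 \leq h(x) - \kappa(\Dst-\Dsf)$), one obtains a lower bound for $\ks(x)-\kd(x)$ consisting of the terms $(\ksf-\xi\ksf^2-B_1)(v_1-v_0)$, $\xi\ksf\, a_1$, $\big(\gamma(1-\xi\ksf)-\xi\gamma^2-A\big)h(x)$, the constant $A\kappa(\Dst-\Dsf)$, and $-\sum_{k\in\Phi}B_k(v_k-v_0)$.

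It remains to show this lower bound is nonnegative under~(\ref{eq:Saf Region}). A one-line algebraic identity shows the first inequality in~(\ref{eq:Saf Region}) is exactly $A \leq \gamma(1-\xi\ksf)-\xi\gamma^2$ (the quoted completed-square expression simply writes this bound as a downward parabola in $\gamma$ with vertex $\gamma=(1-\xi\ksf)/(2\xi)$), so the coefficient of $h(x)$ is nonnegative and, since $h(x)\geq 0$, that term drops. Estimating the rest with $a_1\geq -a_{\rm min}$, $|v_1-v_0|\leq\bar{v}$, and $|v_k-v_0|\leq\bar{v}$ leaves $\ks(x)-\kd(x) \geq A\kappa(\Dst-\Dsf) - \big(|\ksf-\xi\ksf^2-B_1| + \sum_{k\in\Phi}B_k\big)\bar{v} - \xi\ksf a_{\rm min}$, which is nonnegative precisely by the second inequality in~(\ref{eq:Saf Region}) (using $\Dst>\Dsf$ and $\kappa>0$). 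Then Corollary~\ref{cor:extendedNagumo} delivers safety w.r.t.\ $\mathcal{S}\cap\mathcal{S}_{\rm e}$.

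I expect the difficulty to be bookkeeping rather than conceptual. The care points are keeping every inequality direction consistent when the sign assumptions $A,B_1,B_k\geq 0$ are used to pass from $\kd$ to its affine upper bound, and, more subtly, handling the saturations in $V$ and $W$ (the $\min$ operators) so that the affine upper bounds are valid for all $D_0$ and $v_i$, not only in the unsaturated regime where the linearized analysis of Section~\ref{sec:stability} applies. The one genuinely non-routine observation is recognizing that the $h(x)$-coefficient constraint and the completed-square form in~(\ref{eq:Saf Region}) are literally the same inequality; with that in hand, the two estimates above are exactly the two lines of~(\ref{eq:Saf Region}).
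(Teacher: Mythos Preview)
Your proposal is correct and follows essentially the same route as the paper's proof: both invoke Corollary~\ref{cor:extendedNagumo}, reduce to the scalar condition~(\ref{eq:Nagumo Lgh<0}), use $h_{\rm e}(x)=0$ to replace $\ksf(v_1-v_0)-a_0$ by $-\gamma h(x)$, bound $V,W$ by their unsaturated affine expressions, exploit $\kappa\leq\ksf$ together with $D_0-\Dsf\geq 0$ (from $h(x)\geq 0$ and $v_0\geq 0$), and then identify the two inequalities in~(\ref{eq:Saf Region}) with the sign of the $h(x)$-coefficient and the residual constant term respectively. The only cosmetic difference is ordering: you first simplify $\ks$ on the surface (dropping the $\gamma_{\rm e}$-term and eliminating $a_0$) before subtracting $\kd$, whereas the paper carries the full $\ks-\kd$ expression through and only then substitutes; the algebra and the resulting bounds are identical.
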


\begin{remark}
The upper bound of $A$ in Theorem~\ref{theo:Saf Region A-B1-BN} depends on $\gamma$, which can be chosen as ${\gamma = (1-\xi\ksf)/(2\xi)}$ to achieve the highest upper bound, i.e., largest region of safe control gains. In this case, ${1/\xi > \ksf}$ should be satisfied to guarantee ${\gamma > 0}$. Unless specified otherwise, we consider ${\gamma = (1-\xi\ksf)/(2\xi)}$ and ${1/\xi > \ksf}$ in the following calculations.
\end{remark}

Safe choices of control gains do not exist if the lag $\xi$ exceeds a critical value $\xi_{\rm cr}$. 
This value can be derived using (\ref{eq:Saf Region}).
\begin{corollary} \label{cor:critical lag}
\textit{
Safe control gains exist only if:
\begin{equation}
    \label{eq:critical lag}
    \xi \leq \xi_{\rm cr} = \frac{1}{\ksf + 2 \sqrt{\frac{\ksf a_{\rm{min}}}{\kappa (\Dst-\Dsf)}}}.
\end{equation}
}
\end{corollary}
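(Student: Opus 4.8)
The plan is to read Corollary~\ref{cor:critical lag} as a feasibility question for the inequality system~(\ref{eq:Saf Region}) of Theorem~\ref{theo:Saf Region A-B1-BN}: a ``safe choice of control gains'' exists precisely when there are $\gamma>0$, $A\ge 0$, $B_1\ge 0$, and $B_k\ge 0$ for $k\in\Phi$ satisfying both inequalities in~(\ref{eq:Saf Region}). Since $A$ is squeezed between the two bounds, this set is nonempty if and only if the \emph{largest} value the upper bound can take (over admissible $\gamma$) is at least the \emph{smallest} value the lower bound can take (over admissible $B_1,B_k$). So the proof reduces to two one-line optimizations followed by elementary algebra.

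First I would maximize the upper bound $\frac{(1-\xi\ksf)^2}{4\xi}-\xi\big(\gamma-\frac{1-\xi\ksf}{2\xi}\big)^2$ over $\gamma>0$. This is a downward parabola in $\gamma$ with unconstrained maximizer $\gamma^{\star}=\frac{1-\xi\ksf}{2\xi}$ and maximum $\frac{(1-\xi\ksf)^2}{4\xi}$; when $\xi<1/\ksf$ we have $\gamma^{\star}>0$ so this maximum is attained in the admissible range (this is exactly the choice highlighted in the Remark after Theorem~\ref{theo:Saf Region A-B1-BN}). When $\xi\ge 1/\ksf$ a quick sign check shows the bound is strictly negative for every $\gamma>0$, so no $A\ge 0$ is admissible; this is consistent with the claim, since $\xi_{\rm cr}<1/\ksf$. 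Next I would minimize the lower bound $\frac{(|\ksf-\xi\ksf^2-B_1|+\sum_{k\in\Phi}B_k)\bar{v}+\xi\ksf a_{\rm min}}{\kappa(\Dst-\Dsf)}$: choosing $B_k=0$ for all $k\in\Phi$ eliminates the sum, and choosing $B_1=\ksf-\xi\ksf^2=\ksf(1-\xi\ksf)\ge 0$ (again admissible exactly when $\xi\le 1/\ksf$) eliminates the absolute value, leaving the minimum $\frac{\xi\ksf a_{\rm min}}{\kappa(\Dst-\Dsf)}$.

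Combining the two extremal values, safe gains exist only if
\[
\frac{(1-\xi\ksf)^2}{4\xi}\ \ge\ \frac{\xi\ksf a_{\rm min}}{\kappa(\Dst-\Dsf)}.
\]
Since $\Dst>\Dsf$ and $\kappa>0$ are assumed, the right-hand side is a well-defined nonnegative number, and (restricting to $\xi<1/\ksf$, where $1-\xi\ksf>0$) taking square roots turns this into $1-\xi\ksf\ge 2\xi\sqrt{\ksf a_{\rm min}/(\kappa(\Dst-\Dsf))}$, i.e.\ $\xi\big(\ksf+2\sqrt{\ksf a_{\rm min}/(\kappa(\Dst-\Dsf))}\big)\le 1$, which rearranges to exactly~(\ref{eq:critical lag}). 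The only place requiring care is the bookkeeping of the sign of $1-\xi\ksf$ and the verification that the extremal choices $\gamma=\gamma^{\star}$, $B_1=\ksf(1-\xi\ksf)$, $B_k=0$ stay inside their admissible sets — which all hinge on the same condition $\xi<1/\ksf$ — plus noting that the $\xi\ge 1/\ksf$ regime is vacuously covered; everything else is routine.
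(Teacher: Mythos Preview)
Your proof is correct and follows essentially the same approach as the paper: both reduce the feasibility of~(\ref{eq:Saf Region}) to the scalar inequality $\frac{(1-\xi\ksf)^2}{4\xi}\ge\frac{\xi\ksf a_{\rm min}}{\kappa(\Dst-\Dsf)}$ by taking $\gamma=(1-\xi\ksf)/(2\xi)$ and $B_1=\ksf(1-\xi\ksf)$, $B_k=0$, restricted to $\xi<1/\ksf$. Your direct square-root step to reach~(\ref{eq:critical lag}) is slightly cleaner than the paper's route, which factors the inequality as $(1-\xi\ksf+\xi\rho)(1-\xi\ksf-\xi\rho)\ge 0$ with $\rho=2\sqrt{\ksf a_{\rm min}/(\kappa(\Dst-\Dsf))}$ and then does a case split on the sign of $\ksf-\rho$, but the arguments are otherwise equivalent.
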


\noindent Please refer to Appendix~\ref{app:B} and Appendix~\ref{app:C} for the proofs of Theorem~\ref{theo:Saf Region A-B1-BN} and Corollary~\ref{cor:critical lag}.

\begin{remark}
Based on (\ref{eq:Saf Region}), for unbounded velocity difference ${\bar{v} \rightarrow \infty}$, the safe region becomes the single line segment:
\begin{align}
\begin{split}
    &\frac{\xi \ksf a_{\rm{min}}}{\kappa(\Dst-\Dsf)} \leq A \leq \frac{(1-\xi \kappa)^2}{4 \xi}, \\
    &B_1 = \ksf-\xi \ksf^2,\ B_k = 0,\ \forall k \in \Phi.
\end{split}
\label{eq:Saf Region v->inf}
\end{align}
\end{remark}

%%%%%%%%%%%%%%%%%%%%%%%%%%%%%%%%%%%%%%%%%%%%%%%%%%%%%%%%%%%%
\begin{figure}[t]
    \centering
    \includegraphics{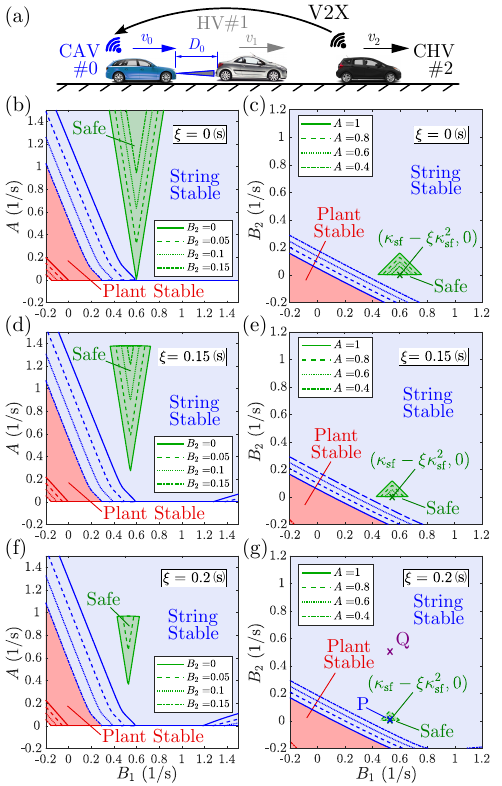}
    \vspace{-2mm}
    \caption{Safety charts of the nominal CCC (\ref{eq:CCC 3veh}) when the CAV responds to the CHV two vehicles ahead ${(n=1)}$, see panel (a).
    The time headway criterion~(\ref{eq:TH}) is considered for the lag (b-c) ${\xi = 0\, \rm{s}}$; (d-e) ${\xi = 0.15\, \rm{s}}$; (f-g) ${\xi = 0.2\, \rm{s}}$ in CAV system~(\ref{eq:CAV system w/ lag}).}
    \label{fig:3veh Safety Chart}
    \vspace{-4mm}
\end{figure}
%%%%%%%%%%%%%%%%%%%%%%%%%%%%%%%%%%%%%%%%%%%%%%%%%%%%%%%%%%%%

Fig.~\ref{fig:2veh Safety Chart}(b-c) depicts the safety condition (\ref{eq:Saf Region}) for the situation in Fig.~\ref{fig:2veh Safety Chart}(a) when the CAV only responds to the preceding CHV using CCC (\ref{eq:CCC general}) with ${\Phi = \emptyset}$.
They are represented under varying lags ($\xi=0\, \rm{s},0.15\, \rm{s},0.2\, \rm{s},0.25\, \rm{s}$) and speed difference limits ($\bar{v}=5\, \rm{m/s},10\, \rm{m/s},15\, \rm{m/s},25\, \rm{m/s}$), respectively, using the parameters from Table~\ref{tab:parameters}. 
These plots are called \textit{safety charts}~\cite{He2018}\cite{Tamas2023CDC}, where the safe domain (green) indicates safe choices of CCC parameters. 
For this case, the safe region is triangular. 
In panel (b), as the lag increases, the upper horizontal boundary is moving down and the lower V-shaped boundary is moving towards smaller $B_1$ and larger $A$, which makes the safe region shrink until it becomes a point at the critical lag ${\xi_{\rm{cr}}=0.3\, \rm{s}}$ obtained by (\ref{eq:critical lag}). 
In panel (c), as the speed difference limit $\bar{v}$ increases, the V-shaped safe region gradually narrows to the green dashed line given by (\ref{eq:Saf Region v->inf}).
This suggests that larger speed difference between the CAV and the CHV makes it more difficult to find safe CCC parameters.

The plant and head-to-tail string stability domains are also visualized using red and blue shading, respectively, in Fig.~\ref{fig:2veh Safety Chart}(b-c).
Their boundaries are given by (\ref{eq:plant stability boundary s=0 final (A,B1)},\ref{eq:plant stability boundary s=jw final (A,B1)}) for plant stability and (\ref{eq:string stability boundary w=0 (A,B1)},\ref{eq:string stability boundary final (A,B1)}) for head-to-tail string stability with ${B_{n+1}=0}$, see Appendix~\ref{app:A}. 
Notably, except for extremely small $\bar{v}$, the safe domain consistently lies within the plant and string stable region, indicating that safe CCC parameters also achieve plant and string stability for this case.

To show how V2X connectivity affects the safety of the CAVs, we examine a scenario presented in Fig.~\ref{fig:3veh Safety Chart}(a), where the CAV responds not only to the preceding HV but also to the CHV two vehicles ahead using CCC (\ref{eq:CCC 3veh}) with $n=1$.
The safety condition (\ref{eq:Saf Region}) is visualized in the ${(B_1, A)}$ plane for various $B_2$ values, and in the ${(B_1, B_2)}$ plane for various $A$ values, using parameters from Table~\ref{tab:parameters} and different lags (${\xi=0\, \rm{s}}$, ${0.15\, \rm{s}}$, and ${0.2\, \rm{s}}$), see Fig.~\ref{fig:3veh Safety Chart}(b-g). 
Similar to Fig.~\ref{fig:2veh Safety Chart}(b), the safe region (green) in Fig.~\ref{fig:3veh Safety Chart}(b,d,f) is still triangular and it shrinks as the lag increases, ultimately vanishing at the same critical lag ${\xi=0.3\, \rm{s}}$. 
Indeed, the safe domains for ${B_2=0}$ match those presented in Fig.~\ref{fig:2veh Safety Chart}(b).
Furthermore, the safe domain moves towards higher $A$ values as $B_2$ increases, signifying that enhanced response to the head CHV’s speed makes a provably safe nominal CCC design more challenging.  
In Fig.~\ref{fig:3veh Safety Chart}(c,e,g), the safe region in the ${(B_1, B_2)}$ plane shrinks to the "center" at ${(\ksf - \xi\ksf^2, 0)}$ as $A$ decreases.
The plant stable and head-to-tail string stable domains are also presented with red and blue shading, respectively. 
Similar to Fig.~\ref{fig:2veh Safety Chart}, the safe domains in Fig.~\ref{fig:3veh Safety Chart} consistently reside within the plant and string stable region.

We remark that, in addition to CCC~(\ref{eq:CCC general}), a more general form of CCC could also be used, where the CAV not only responds to distance and speeds, but also to the HV's and CHV's accelerations~\cite{Ge2014}:
\begin{align}
\begin{split}
    \kd(x) &= A \big( V(D_0) - v_0 \big) + B_1 \big( W(v_{1}) - v_0 \big)\\ 
    &+ \sum_{k \in \Phi} B_k \big( W(v_{k}) - v_0 \big) + C_1 a_{1} + \sum_{k \in \Phi} C_k a_k.
\end{split}
\label{eq:CCC general w/ acc}
\end{align}
where $C_1$ and $C_k$ are acceleration feedback gains.

Similar to Theorem~\ref{theo:Saf Region A-B1-BN}, we derive the safety condition of CCC~(\ref{eq:CCC general w/ acc}) by analyzing when condition (\ref{eq:Nagumo Lgh<0}) holds.

\begin{theorem}
\label{theo:Saf Region general CCC}
\textit{
System (\ref{eq:CAV system w/ lag}) with ${u_0 = \kd(x)}$ given by (\ref{eq:CCC general w/ acc}) and ${A \geq 0}$, ${B_1 \geq 0}$, ${B_k \geq 0,\ \forall k \in \Phi}$, is safe ${w.r.t.\ \mathcal{S} \cap \mathcal{S}_{\rm{e}}}$ given by~(\ref{eq:safe set},\ref{eq:safeset_extended},\ref{eq:TH},\ref{eq:extend TH}) if ${v_0 \geq 0}$, ${|v_{1} - v_0| \leq \bar{v}}$, ${|v_{k} - v_0| \leq \bar{v},\ \forall k \in \Phi}$ with some ${\bar{v}>0}$, ${a_{1}, a_{k} \in [-\bar{a}, \bar{a}],\ \forall k \in \Phi}$ with some ${\bar{a}>0}$, ${\Dst > \Dsf}$, ${\kappa_{\rm{sf}} \geq \kappa}$, ${\gamma > 0}$, and:
\begin{align}
\begin{split}
    \frac{N_1 \bar{v} + N_2 \bar{a}}{\kappa(\Dst-\Dsf)} \! \leq \! A \! \leq \! \frac{(1-\xi \ksf)^2}{4 \xi} \! - \! \xi \bigg(\gamma \!-\! \frac{1-\xi\ksf}{2\xi} \bigg)^2,
\end{split}
\label{eq:Saf Region w/ acc}
\end{align}
where:
\begin{align}
\begin{split}
    & N_1 = |\ksf-\xi \ksf^2 -B_1| + \sum_{k \in \Phi} B_k, \\
    & N_2 = | \xi \ksf - C_1 | + \sum_{k \in \Omega} |C_k|.
\end{split}
\label{eq:Saf Region w/ acc detail}
\end{align}
}
\end{theorem}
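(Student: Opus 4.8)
The plan is to reduce Theorem~\ref{theo:Saf Region general CCC} to an application of Corollary~\ref{cor:extendedNagumo}, following the same route as the proof of Theorem~\ref{theo:Saf Region A-B1-BN} in Appendix~\ref{app:B}; the only genuinely new ingredient is the pair of acceleration-feedback terms $C_1 a_1 + \sum_{k\in\Phi} C_k a_k$ in (\ref{eq:CCC general w/ acc}). Since $L_g h(x) \equiv 0$ by (\ref{eq:CBF derivative detail}) and $\nabla h_{\rm e}(x) \neq 0$ on $\{h_{\rm e}(x) = 0\}$ by (\ref{eq:eCBF derivative detail}), Corollary~\ref{cor:extendedNagumo} applies, and since the input is scalar it suffices to verify (\ref{eq:Nagumo Lgh<0}), namely $\ks(x) - \kd(x) \geq 0$ for every $x$ with $h_{\rm e}(x) = 0$ and $h(x) \geq 0$, where $\kd$ is now (\ref{eq:CCC general w/ acc}) and $\ks$ is (\ref{eq:extended SafFilter}). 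A preliminary reduction removes the saturations: because $V(D_0) \leq \kappa(D_0 - \Dst)$, $W(v_1) \leq v_1$, $W(v_k) \leq v_k$ and $A, B_1, B_k \geq 0$, replacing the saturated range and speed policies in $\kd$ by their linear forms can only increase $\kd$, hence only decrease $\ks - \kd$, so it is enough to prove the inequality with $V(D_0) = \kappa(D_0 - \Dst)$ and $W(v) = v$.

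On the boundary $\{h_{\rm e}(x) = 0\}$ I would first eliminate $a_0$ using $a_0 = \ksf(v_1 - v_0) + \gamma h(x)$, which follows by setting (\ref{eq:extend TH}) to zero and using (\ref{eq:TH}). Substituting this into (\ref{eq:extended SafFilter})---whose last bracketed term equals $\xi\gamma_{\rm e} h_{\rm e}(x)$ and therefore vanishes here---collapses $\ks(x)$ to $(\ksf - \xi\ksf^2)(v_1 - v_0) + \xi\ksf a_1 + \gamma\bigl((1-\xi\ksf) - \xi\gamma\bigr) h(x)$. Next, using $h(x) \geq 0$, I would trade $D_0$ for $h(x)$ via $D_0 = \Dsf + \bigl(h(x) + v_0\bigr)/\ksf$, so that the distance term of the linearized $\kd$ becomes $\kappa(D_0 - \Dst) - v_0 = -\kappa(\Dst - \Dsf) + (\kappa/\ksf) h(x) - (1 - \kappa/\ksf) v_0$. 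Collecting terms then expresses $\ks(x) - \kd(x)$ as a speed-difference part with coefficients $\ksf - \xi\ksf^2 - B_1$ and $-B_k$, an acceleration part with coefficients $\xi\ksf - C_1$ and $-C_k$, the term $\bigl(\gamma((1-\xi\ksf)-\xi\gamma) - A\kappa/\ksf\bigr) h(x)$, the term $A(1 - \kappa/\ksf) v_0$, and the constant $A\kappa(\Dst - \Dsf)$.

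The decisive observation is that the upper bound on $A$ in (\ref{eq:Saf Region w/ acc}) is, after completing the square on its right-hand side, identically $A \leq \gamma\bigl((1-\xi\ksf) - \xi\gamma\bigr)$; together with $\ksf \geq \kappa > 0$ and $A \geq 0$ this makes both the coefficient of $h(x)$, since $\gamma((1-\xi\ksf)-\xi\gamma) - A\kappa/\ksf \geq A(1-\kappa/\ksf) \geq 0$, and the coefficient of $v_0$, namely $A(1-\kappa/\ksf)$, nonnegative. Using $h(x) \geq 0$ and $v_0 \geq 0$ these two terms may be discarded, and the residual speed- and acceleration-difference terms are bounded below by the triangle inequality with $|v_1 - v_0| \leq \bar{v}$, $|v_k - v_0| \leq \bar{v}$, $|a_1| \leq \bar{a}$, $|a_k| \leq \bar{a}$ and $B_k \geq 0$, which yields $\ks(x) - \kd(x) \geq A\kappa(\Dst - \Dsf) - N_1 \bar{v} - N_2 \bar{a}$ with $N_1, N_2$ as in (\ref{eq:Saf Region w/ acc detail}); this is nonnegative exactly by the lower bound on $A$ in (\ref{eq:Saf Region w/ acc}). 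The main obstacle is the bookkeeping in the collecting step and the verification that the $D_0$- and $v_0$-dependence cancels in a nonnegative direction---which rests on recognizing the $A$-upper-bound in the form $A \leq \gamma((1-\xi\ksf)-\xi\gamma)$ and on $\kappa \leq \ksf$, exactly as in the proof of Theorem~\ref{theo:Saf Region A-B1-BN}, while the new $C_1 a_1$ and $C_k a_k$ terms are carried through routinely to produce the $N_2 \bar{a}$ contribution.
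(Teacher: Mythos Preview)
Your proposal is correct and follows essentially the same route as the paper's proof in Appendix~\ref{app:D}, which explicitly mirrors the proof of Theorem~\ref{theo:Saf Region A-B1-BN} and only carries through the extra acceleration terms to produce the $N_2\bar a$ contribution. Your bookkeeping differs cosmetically---you eliminate $a_0$ and $D_0$ in favor of $h(x)$ up front and note that the $\gamma_{\rm e}$-term vanishes on $\{h_{\rm e}=0\}$, whereas the paper adds and subtracts $A\,h(x)$ and drops the nonnegative term $A(\ksf-\kappa)(D_0-\Dsf)$---but the key step (recognizing the upper bound on $A$ as $A\le\gamma\bigl((1-\xi\ksf)-\xi\gamma\bigr)$ and using $\ksf\ge\kappa$, $h(x)\ge0$, $v_0\ge0$) is identical.
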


\noindent Please refer to the Appendix~\ref{app:D} for the proof.

%%%%%%%%%%%%%%%%%%%%%%%%%%%%%%%%%%%%%%%%%%%%%%%%%%%%%%%%%%%%%%%%%%%%%%%%%%%%%%%%
\subsection{Safety-critical CCC}
%%%%%%%%%%%%%%%%%%%%%%%%%%%%%%%%%%%%%%%%%%%%%%%%%%%%%%%%%%%%%%%%%%%%%%%%%%%%%%%%

The safety charts in Fig.~\ref{fig:2veh Safety Chart} and Fig.~\ref{fig:3veh Safety Chart} help determine safe controller parameters for the nominal CCC (\ref{eq:CCC general}).
However, the safe region is relatively small compared to the string stable domain.
Safe regions in Fig.~\ref{fig:3veh Safety Chart}(c,e,g) only cover low $B_2$ values, which prevent safe CCC from harnessing the benefits of connectivity, e.g., energy efficiency or high degree of string stability \cite{chen2024CCCsafety}.
Furthermore, the safe domain even shrinks considerably as the lag increases, making it even impossible to achieve a safe nominal CCC design when ${\xi > \xi_{\rm cr}}$.
Therefore, tuning the existing CCC to always maintain safety is overly conservative and may lead to sub-optimal performance.
To mitigate the trade-off between safety and performance, we propose the \textit{safety-critical CCC} defined by (\ref{eq:CCC general},\ref{eq:SafCon},\ref{eq:extended SafFilter}). 
This approach enables us to optimize the gains of the nominal CCC (\ref{eq:CCC general}) to achieve good performance, while the safety filter (\ref{eq:SafCon}) activates when necessary to guarantee safety.

To evaluate the proposed controller, we first simulate a scenario where the CAV responds to the CHV two vehicles ahead (${n=1}$), as illustrated in Fig.~\ref{fig:3veh Safety Chart}(a). This simulation incorporates a lag of ${\xi=0.2\, \rm{s}}$ and tests various controllers:
\begin{itemize}
    \item nominal CCC~(\ref{eq:CCC 3veh}) with safe gains (point P in Fig.~\ref{fig:3veh Safety Chart}(g)),
    \item nominal CCC~(\ref{eq:CCC 3veh}) with unsafe gains (point Q in Fig.~\ref{fig:3veh Safety Chart}(g)),
    \item safety-critical CCC (\ref{eq:CCC 3veh},\ref{eq:SafCon},\ref{eq:extended SafFilter}) with unsafe gains (point Q).
\end{itemize}
The results are shown in Fig.~\ref{fig:3veh Simulation safe/unsafe gains}(a-d) by teal, blue and orange dashed curves, respectively, using parameters from Table ~\ref{tab:parameters}.

%%%%%%%%%%%%%%%%%%%%%%%%%%%%%%%%%%%%%%%%%%%%%%%%%%%%%%%%%%%%%
\begin{figure}[ht]
    \centering
    \includegraphics{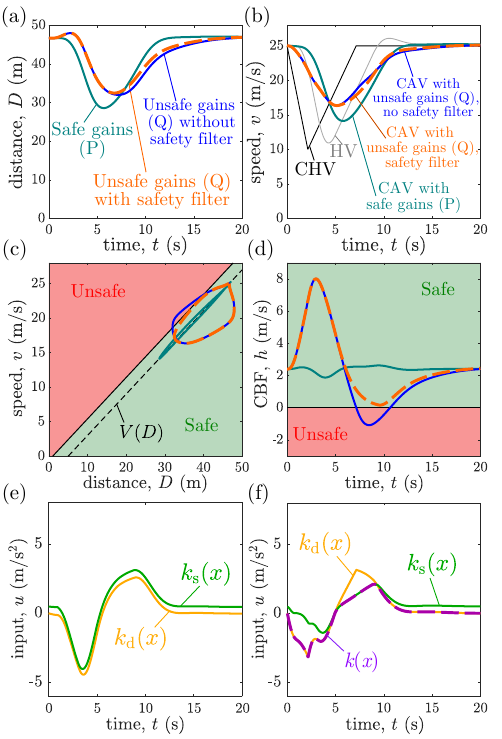}
    \vspace{-3mm}
    \caption{(a-d) Simulations of system (\ref{eq:HV system},\ref{eq:CAV system w/ lag}) for ${\xi = 0.2\, \mathrm{s}}$ using the CCC~(\ref{eq:CCC 3veh}) with ${n=1}$ and with safe gains (teal) and unsafe gains (blue), and using the safety-critical CCC~(\ref{eq:CCC 3veh},\ref{eq:SafCon},\ref{eq:extended SafFilter}) with a safety filter (orange).
    Nominal and safe control inputs using (e) safe gains, and (f) unsafe gains with safety filter.}
    \label{fig:3veh Simulation safe/unsafe gains}
    \vspace{-6mm}
\end{figure}
%%%%%%%%%%%%%%%%%%%%%%%%%%%%%%%%%%%%%%%%%%%%%%%%%%%%%%%%%%%%

In Fig.~\ref{fig:3veh Simulation safe/unsafe gains}, we consider an case where the CHV applies the brake with maximum deceleration, $a_{\rm min}$, and subsequently resumes its initial speed using maximum acceleration, $a_{\rm max}$, causing a speed perturbation denoted as $v_{\rm pert}$. 
The behaviors expected from Fig.~\ref{fig:3veh Safety Chart}(f-g) are shown in panel (b).
In all cases the CAV is plant and head-to-tail string stable, as its speed fluctuation is smaller than that of the CHV.
In panels (c-d), the trajectory for safe gains P (teal) always stays inside the safe set, while the path for unsafe gains Q (blue) crosses the safety boundary and enters the unsafe region.
Additionally, as stated by Corollary~\ref{cor:extendedCBF}, the safety-critical CCC~(\ref{eq:CCC 3veh},\ref{eq:SafCon},\ref{eq:extended SafFilter}) successfully guarantees safety by slightly altering the unsafe nominal CCC only when the system is about to become unsafe (orange).

This minimal safety filter intervention is further demonstrated by Fig.~\ref{fig:3veh Simulation safe/unsafe gains}(e-f), which presents the nominal CCC input~(\ref{eq:CCC 3veh}) and safe controller input (\ref{eq:extended SafFilter}), denoted as $k_{\rm d}(x)$ and $k_{\rm s}(x)$, respectively.
For safe gains P in panel (e), $\ks(x)$ remains consistently higher than $\kd(x)$, indicating that the nominal CCC $\kd(x)$ can ensure safety and the safety filter does not need to be activated.
For unsafe gains Q in panel (f), the safety-critical controller $k(x)$ remains the same as $k_{\rm d}(x)$ when safe, while safety filter~(\ref{eq:SafCon}) intervenes when $k_{\rm d}(x)$ exceeds $k_{\rm s}(x)$ and alters the control input to $k_{\rm s}(x)$ to ensure safety.

The safety filter is activated during acceleration rather than deceleration, see the orange dashed curve in Fig.~\ref{fig:3veh Simulation safe/unsafe gains}(b) and the purple dashed curve in Fig.~\ref{fig:3veh Simulation safe/unsafe gains}(f). 
This occurs since the CAV matches its velocity to the CHV.
As the CHV decelerates, the CAV responds to it and brakes sooner than the HV, thereby increasing the distance between them and improving safety, see orange dashed curves in Fig.~\ref{fig:3veh Simulation safe/unsafe gains}(a) and~(d).
However, when the CHV accelerates after braking, the CAV also begins to accelerate to match its velocity even when the preceding HV still drives at a lower speed, resulting in rapidly decreasing distance, see Fig.~\ref{fig:3veh Simulation safe/unsafe gains}(a-b).
Although this could cause a safety violation for the nominal CCC (blue), the safety filter engages at this point, mitigates the acceleration, and maintains a safe distance to the HV, see Fig.~\ref{fig:3veh Simulation safe/unsafe gains}(f).
Therefore, with the proposed safety-critical controller, connectivity can be leveraged to improve safety during deceleration while avoiding safety violations during acceleration.

%%%%%%%%%%%%%%%%%%%%%%%%%%%%%%%%%%%%%%%%%%%%%%%%%%%%%%%%%%
\begin{figure}[t]
    \centering
    \includegraphics{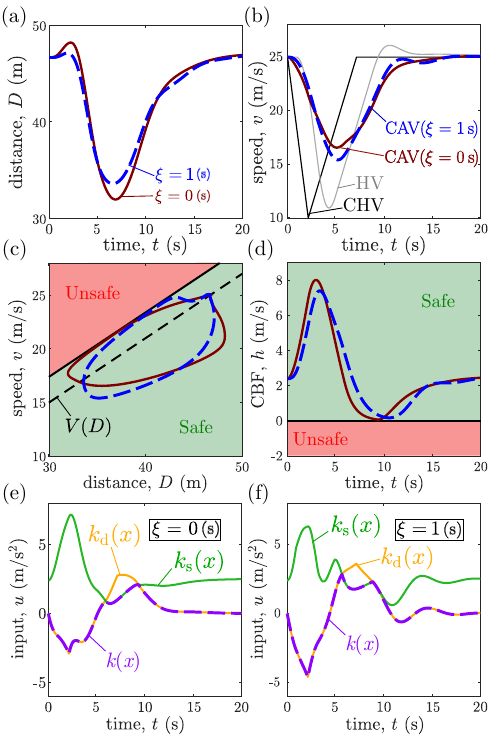}
    \vspace{-3mm}
    \caption{(a-d) Simulations of system~(\ref{eq:HV system},\ref{eq:CAV system w/ lag}) for lags ${\xi=0\, \mathrm{s}}$ (dark red) and ${\xi=1\, \mathrm{s}}$ (dashed blue) using the safety-critical CCC~(\ref{eq:CCC 3veh},\ref{eq:SafCon},\ref{eq:extended SafFilter}) with ${n=1}$ and unsafe gains. Nominal and safe control inputs with  (e) $\xi=0\, \mathrm{s}$, and with (f) $\xi=1\, \mathrm{s}$.}
    \label{fig:3veh Simulation w/ different lag}
    \vspace{-6mm}
\end{figure}
%%%%%%%%%%%%%%%%%%%%%%%%%%%%%%%%%%%%%%%%%%%%%%%%%%%%%%%%%%

To investigate the impact of the first-order lag $\xi$ on the behavior of safety-critical CCC~(\ref{eq:CCC 3veh},\ref{eq:SafCon},\ref{eq:extended SafFilter}), we simulate same scenario depicted in Fig.~\ref{fig:3veh Simulation safe/unsafe gains}. 
These simulations incorporate lags ${\xi=0\, \rm{s}}$ and ${\xi=1\, \rm{s}}$ in~(\ref{eq:CAV system w/ lag}), see Fig.~\ref{fig:3veh Simulation w/ different lag}. 
We use the parameters in Table \ref{tab:parameters} and the unsafe gains Q with the safety filter.

In Fig.~\ref{fig:3veh Simulation w/ different lag}(a-b), when the CHV decelerates, the CAV with a higher lag (blue) responds later and requires more pronounced braking to maintain a safe distance, which results in a larger speed perturbation.
However, when the CHV accelerates subsequent to braking, the CAV with a higher lag also accelerates later, and at the same time the HV also starts to increase its speed. 
Thus, the lag in acceleration enlarges the minimum distance between the CAV and HV, as shown in Fig.~\ref{fig:3veh Simulation w/ different lag}(a), thereby improving safety during acceleration and requiring less safety filter intervention, see Fig.~\ref{fig:3veh Simulation w/ different lag}(e-f). 
Importantly, the safety filter can always intervene and provide safety guarantees regardless of lags, as the trajectories always keep within the safe region in Fig.~\ref{fig:3veh Simulation w/ different lag}(c-d). 
Fig.~\ref{fig:3veh Simulation w/ different lag}(e-f) demonstrate the safety filter intervention by showing the nominal CCC input $k_{\rm d}(x)$ in (\ref{eq:CCC general}) and safe controller input $k_{\rm s}(x)$ in (\ref{eq:extended SafFilter})  with CAV lag ${\xi=0\, \rm{s}}$ and ${\xi=1\, \rm{s}}$, respectively. 
In both panels, the safety filter engages during CAV acceleration to mitigate the acceleration and keep safe distance. 
For the CAV with larger lag ${\xi=1\, \rm{s}}$, it intervenes later and for a shorter period due to the CAV's later response to the CHV's acceleration.

To show the applicability of the proposed controller in the real world, we explore the multi-vehicle scenarios shown in Fig.~\ref{fig:real data Simulation n=2}(a) and Fig.~\ref{fig:real data Simulation n=6}(a), where the CAV is connected to a CHV two vehicles ahead (${n=1}$, ${\Phi={2}}$) and six vehicles ahead (${n=5}$, ${\Phi={6}}$). 
The speed profiles of the CHV and HVs are from real experimental data~\cite{jin2018experimental}. 
Fig.~\ref{fig:real data Simulation n=2}(b-c) and Fig.~\ref{fig:real data Simulation n=6}(b-c) present the simulation results of CAVs with lag ${\xi=0.6\rm{s}}$ using nominal CCC~(\ref{eq:CCC 3veh}) (blue) and safety-critical CCC~(\ref{eq:CCC 3veh},\ref{eq:SafCon},\ref{eq:extended SafFilter}) (orange), both with unsafe gains Q.
Fig.~\ref{fig:real data Simulation n=2}(d) and Fig.~\ref{fig:real data Simulation n=6}(d) showcase the safety filter's engagement by visualizing the nominal, safe, and safety-critical control inputs, denoted as $k_{\rm d}(x)$, $k_{\rm s}(x)$, and $k(x)$, respectively.

In Fig.~\ref{fig:real data Simulation n=2} and Fig.~\ref{fig:real data Simulation n=6}, the safety-critical CCC successfully prevents the CAV from leaving the safe set as opposed to the nominal CCC, see panel (c). 
Similar to Fig.~\ref{fig:3veh Simulation safe/unsafe gains} and Fig.~\ref{fig:3veh Simulation w/ different lag}, the safety filter engages more frequently during CAV acceleration than deceleration, see panels (b) and (d) in Fig.~\ref{fig:real data Simulation n=2} and Fig.~\ref{fig:real data Simulation n=6}. 
During deceleration, the CAV that is connected to a CHV farther ahead initiates braking earlier to synchronize with the CHV's speed, consequently increasing the distance and enhancing safety. 
This behavior results in the higher peak of $k_{\rm s}(x)$ in Fig.~\ref{fig:real data Simulation n=6}(d) compared to Fig.~\ref{fig:real data Simulation n=2}(d).
During acceleration, the CAV that is connected to a CHV farther ahead also accelerates earlier while the preceding HV is still traveling at a lower speed, as shown by the gray line in the subplot of Fig.~\ref{fig:real data Simulation n=6}(b).
This leads to a rapidly decreasing distance, necessitating more safety filter intervention to keep a safe distance, as indicated in the subplot of Fig.~\ref{fig:real data Simulation n=6}(b).

The prolonged safety filter interventions are also evident from the longer duration when ${k_{\rm d}(x) > k_{\rm s}(x)}$ in Fig.~\ref{fig:real data Simulation n=6}(d) compared to Fig.~\ref{fig:real data Simulation n=2}(d). 
Moreover, the behavior of $k_{\rm d}(x)$ in Fig.~\ref{fig:real data Simulation n=2}(d) and Fig.~\ref{fig:real data Simulation n=6}(d) shows that connecting to a farther CHV results in smoother deceleration and harsher acceleration. 
However, the safety filter significantly mitigates the acceleration, thereby reducing the control effort required for the CAV, lowering energy consumption, and enhancing comfort, as seen in $k(x)$.
To summarize, connecting to a farther CHV substantially enhances safety during deceleration, and it may force the safety filter to intervene when ensuring safety during acceleration. 
Consequently, V2X connectivity, together with a safety-critical controller, significantly improves safety and reduces the control effort for the longitudinal control of CAVs.

%%%%%%%%%%%%%%%%%%%%%%%%%%%%%%%%%%%%%%%%%%%%%%%%%%%%%%%%%%%%
\begin{figure*}[!t]
    \centering
    \includegraphics{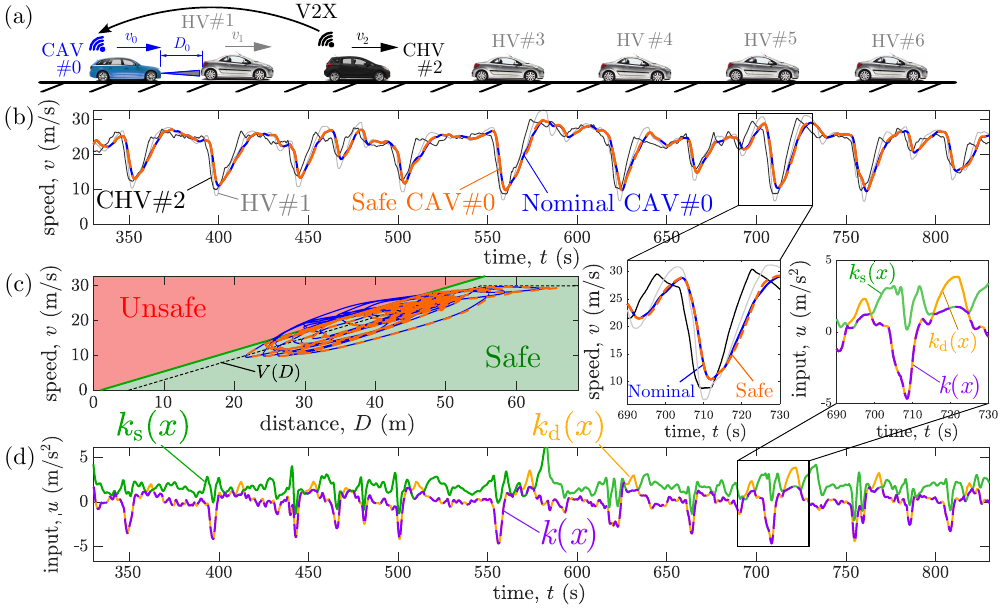}
    \vspace{-3mm}
    \caption{Simulation of a vehicle chain where a CAV (\ref{eq:CAV system w/ lag}) responds to the CHV two vehicles ahead with lag ${\xi=0.6\, \rm{s}}$ using the safety-critical CCC~(\ref{eq:CCC 3veh},\ref{eq:SafCon},\ref{eq:extended SafFilter}). The speed profiles of the CHV and the HVs are real experiment data.}
    \label{fig:real data Simulation n=2}
    %\vspace{-6mm}
\end{figure*}
%%%%%%%%%%%%%%%%%%%%%%%%%%%%%%%%%%%%%%%%%%%%%%%%%%%%%%%%%%%%

%%%%%%%%%%%%%%%%%%%%%%%%%%%%%%%%%%%%%%%%%%%%%%%%%%%%%%%%%%%%
\begin{figure*}[!t]
    \centering
    \includegraphics{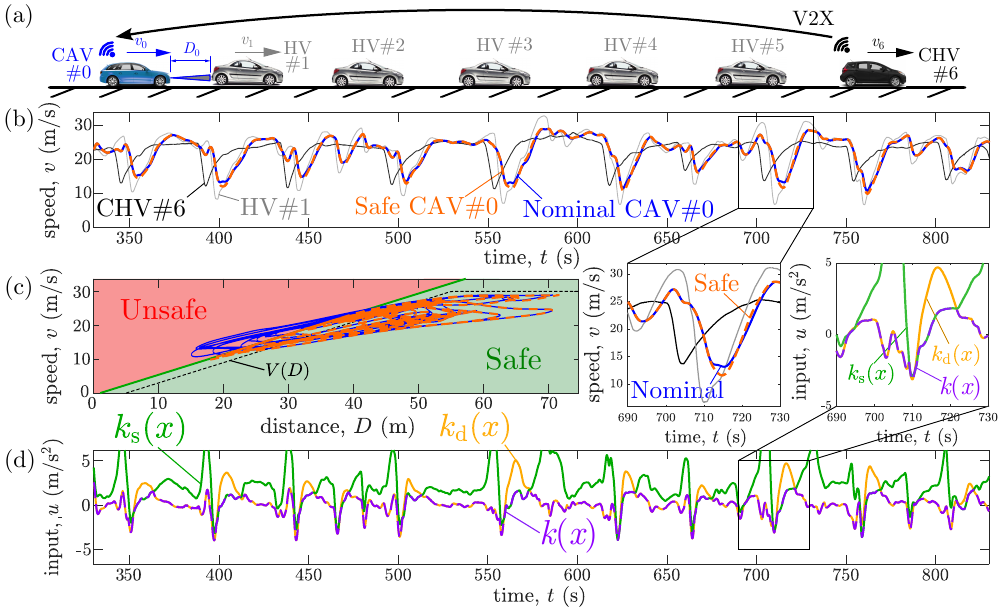}
    \vspace{-3mm}
    \caption{Simulation of a vehicle chain where a CAV (\ref{eq:CAV system w/ lag}) responds to the CHV six vehicles ahead with lag ${\xi=0.6\, \rm{s}}$ using the safety-critical CCC~(\ref{eq:CCC 3veh},\ref{eq:SafCon},\ref{eq:extended SafFilter}). The speed profiles of the CHV and the HVs speed are real experiment data.}
    \label{fig:real data Simulation n=6}
    %\vspace{-6mm}
\end{figure*}
%%%%%%%%%%%%%%%%%%%%%%%%%%%%%%%%%%%%%%%%%%%%%%%%%%%%%%%%%%%%

%%%%%%%%%%%%%%%%%%%%%%%%%%%%%%%%%%%%%%%%%%%%%%%%%%%%%%%%%%%%%%%%%%%%%%%%%%%%%%%%
\section{Conclusions}
\label{sec:concl}
%%%%%%%%%%%%%%%%%%%%%%%%%%%%%%%%%%%%%%%%%%%%%%%%%%%%%%%%%%%%%%%%%%%%%%%%%%%%%%%%

In this paper, we used control barrier function (CBF) theory to investigate the safety of connected automated vehicles (CAVs) implementing connected cruise control (CCC).
In particular, we developed {\em safety charts} for existing CCC designs to provide safe controller parameters.
We showed that the safe region in these charts shrinks due to the lag in the CAVs dynamics.
The critical value of the lag for which safe regions disappear was also determined.
By comparing safety and stability charts, we demonstrated that CAVs can achieve plant and head-to-tail string stability if they implement safe CCC parameters.
Furthermore, we proposed {\em safety-critical CCC} to achieve high performance while guaranteeing safety despite the lag.
We verified this controller via simulations using real traffic data, and highlighted that connectivity to vehicles farther ahead improves the safety of CAVs during deceleration, and it requires safety filter intervention during acceleration.
In future research, we plan to investigate the nonlinear dynamics of safety-critical CCC.

\newpage
{\color{white}.}
\newpage
{\color{white}.}
\newpage
%%%%%%%%%%%%%%%%%%%%%%%%%%%%%%%%%%%%%%%%%%%%%%%%%%%%%%%%%%%%%%%%%%%%%%%%%%%%%%%%
\section*{Appendix}
\subsection{Expressions of Stability Boundaries}\label{app:A}
%%%%%%%%%%%%%%%%%%%%%%%%%%%%%%%%%%%%%%%%%%%%%%%%%%%%%%%%%%%%%%%%%%%%%%%%%%%%%%%%

In this part, we provide the detailed calculations related to the stability analysis of Section \ref{sec:stability}. 
We consider the case shown in Fig.~\ref{fig:3 veh setup}, where a CAV follows the preceding HV while responding to the CHV that is ${n+1}$ vehicles ahead using CCC (\ref{eq:CCC 3veh}). 
The resulting formulas (\ref{eq:plant stability boundary s=0 final (A,B1)},\ref{eq:plant stability boundary s=jw final (A,B1)},\ref{eq:plant stability boundary s=jw final (B1,Bn)}) and (\ref{eq:string stability boundary w=0 (A,B1)},\ref{eq:string stability boundary w=0 (B1,Bn)},\ref{eq:string stability boundary final (A,B1)},\ref{eq:string stability boundary final (B1,Bn)}) were used in Fig.~\ref{fig: stability boundary analytical}, Fig.~\ref{fig:2veh Safety Chart} and Fig.~\ref{fig:3veh Safety Chart} to plot the plant stability (red) and head-to-tail string stability boundaries (blue), respectively.

For simplicity, we consider that each HV is identical, with ${A_i=A_{\rm h}}$, ${B_i=B_{\rm h}}$, ${\kappa_i=\kappa_{\rm h}}$ and ${V_i(.)=V_{\rm h}(.)}$ in (\ref{eq:HVcon},\ref{eq:HV range policy}) and ${\mathbf{a}_i = \mathbf{a}_{\rm h}}$, ${\mathbf{a}_{i\tau} = \mathbf{a}_{\tau}}$, ${\mathbf{b}_i = \mathbf{b}_{\rm h}}$ and ${\mathbf{b}_{i\tau} = \mathbf{b}_{\tau}}$ in (\ref{eq:linearized dynamics}) for ${i \in \{1,...,n\}}$.
Then, the coefficient matrices in~(\ref{eq:linearized dynamics}) with ${\Phi=\{n+1\}}$ are:
\begin{align}
\begin{split}
\mathbf{a}_0&=\begin{bmatrix}
    0 & -1 & 0
    \\ 
    0 & 0 & 1
    \\ 
    \frac{A\kappa}{\xi} & -\frac{\Psi_0}{\xi} & -\frac{1}{\xi}
    \end{bmatrix}\! \!,
\, \mathbf{b}_{0,1}=\begin{bmatrix}
    1
    \\ 
    0
    \\ 
    \frac{B_1}{\xi}
    \end{bmatrix}\! \!,
\, \mathbf{b}_{0,n+1}=\begin{bmatrix}
    0
    \\ 
    0
    \\ 
    \frac{B_{n\!+\!1}}{\xi}
    \end{bmatrix}\! \!,
    \\
\, \mathbf{a}_{\rm h}&=\begin{bmatrix}
    0 & -1
    \\ 
    0 & 0
    \end{bmatrix}\! \!,
\, \mathbf{a}_{\tau}=\begin{bmatrix}
    0 & 0
    \\ 
    A_{\rm h} \kappa & -\Psi_{\rm h}
    \end{bmatrix}\! \!,
\, \mathbf{b}_{\rm h}=\begin{bmatrix}
    1\\ 
    0
    \end{bmatrix}\! \!,
\, \mathbf{b}_{\tau}=\begin{bmatrix}
    0
    \\ 
    B_{\rm h}
    \end{bmatrix}\! \!,
\end{split}
\label{eq:coefficient matrices}
\end{align}
with ${\Psi_0 = A+B_1+B_{n+1}}$, ${\Psi_{\rm h} = A_{\rm h}+B_{\rm h}}$.
Substituting~(\ref{eq:coefficient matrices}) into (\ref{eq:link transfer function general}) leads to:
\begin{align}
\begin{split}
    T_{0,1}(s) &= \frac{B_1 s + A \kappa}{\xi s^3 + s^2 + \Psi_0 s + A \kappa}, 
    \\
    T_{0,n+1}(s) &= \frac{B_{n+1} s}{\xi s^3 + s^2 + \Psi_0 s + A \kappa}, 
    \\
    T_{i,i+1}(s) &= \frac{B_{\rm h} s + A_{\rm h} \kappa_{\rm h}}{e^{s\tau}s^2 + \Psi_h s + A_{\rm h} \kappa_{\rm h}},
\end{split}
\label{eq:link transfer function detail}
\end{align}
for $i \in \{1,...,n\}$, and (\ref{eq:h2t transfer func 1,n+1}) becomes:
\begin{align}
\begin{split}
    G_{0,n+1}(s) &= T_{0,1}(s)T_{i,i+1}(s)^{n} + T_{0,n+1}(s).
\end{split}
\label{eq:h2t transfer func final}
\end{align}

Since ${\mathrm{D}\big(T_{0,1}(s)\big) = \mathrm{D}\big(T_{0,n+1}(s)\big)}$ given by (\ref{eq:link transfer function detail}), where $\mathrm{D}(.)$ denotes the denominator, the characteristic equation of the system is given by:
\begin{equation}
    \mathrm{D}\big(G_{0,n+1}(s)\big) = \mathrm{D}\big(T_{0,1}(s)\big) \mathrm{D}\big(T_{i,i+1}(s)\big)^n = 0.
\label{eq:characteristic equation detail}
\end{equation}
We assume that all HVs are plant stable, then ${\mathrm{D}\big(T_{i,i+1}(s)\big)=0 \implies {\rm Re}(s)<0}$, and the characteristic equation in (\ref{eq:characteristic equation detail}) can be simplified to:
\begin{equation}
    \mathrm{D}(T_{0,1}(s)) = 0.
\label{eq:characteristic equation simplified}
\end{equation}

Equation (\ref{eq:plant stability boundary, s=0}) gives plant stability boundary when  ${s=0}$. After substituting (\ref{eq:link transfer function detail}) and (\ref{eq:characteristic equation simplified}), it leads to:
\begin{equation}
    A=0.
\label{eq:plant stability boundary s=0 final (A,B1)}
\end{equation}
The ${s=\pm \mathrm{j}\Omega}$ plant stability boundary is obtained by (\ref{eq:plant stability boundary, s=jw}).
By substituting $T_{0,1}(\mathrm{j}\Omega)$ from (\ref{eq:link transfer function detail}) and (\ref{eq:characteristic equation simplified}) into (\ref{eq:plant stability boundary, s=jw}), we get the plant stability boundaries in the ${(A,B_1)}$ space: 
\begin{align}
\begin{split}
    A = \frac{\Omega^2}{\kappa},\quad B_1 = \bigg(\xi - \frac{1}{\kappa} \bigg)\Omega^2 -B_{n+1},
\end{split}
\label{eq:plant stability boundary s=jw final (A,B1)}
\end{align}
or the boundaries in the ${(B_1,B_{n+1})}$ space:
\begin{align}
\begin{split}
    B_{n+1} = A(\kappa\xi -1) - B_1.
\end{split}
\label{eq:plant stability boundary s=jw final (B1,Bn)}
\end{align}

Equations (\ref{eq:P(w)}) and (\ref{eq:string stability boundary, w=0}) provide the ${\omega=0}$ head-to-tail string stability boundary.
After denoting the real and imaginary parts of the HVs' head-to-tail transfer function as ${\GaR = \mathrm{Re}(T_{i,i+1}(\mathrm{j}\omega)^{n})}$, ${\GaI = \mathrm{Im}(T_{i,i+1}(\mathrm{j}\omega)^{n})}$, and substituting (\ref{eq:link transfer function detail}) and (\ref{eq:h2t transfer func final}) into (\ref{eq:P(w)}), we have~\cite{guo2023connected}:
\begin{align}
%\begin{split}
    P(\omega) &= A^2 \kappa^2 \frac{1-\Gamma_{\rm I}^2(\omega)-\Gamma_{\rm R}^2(\omega)}{\omega^2} - 2 A \kappa \bigg(B_{n+1} \frac{\GaI}{\omega} + 1 \bigg) \nonumber
    \\
    &+ A(A + 2B_1 + 2B_{n+1}) + \mathcal{O}(\omega),
%\end{split}
\label{eq:P(w) expand}
\end{align}
where $\mathcal{O}(\omega)$ denotes the higher order terms w.r.t. $\omega$ that vanish as ${\omega \to 0}$. Note that $\mathcal{O}(\omega)$ contains the terms of ${1-\GaR}$ and $\GaI$ since ${\mathrm{lim}_{\omega \rightarrow 0} \GaR = 1}$ and ${\mathrm{lim}_{\omega \rightarrow 0} \GaI = 0}$ hold. 
Then, one can get the ${\omega = 0}$ head-to-tail string stability boundary via (\ref{eq:string stability boundary, w=0}) by taking the limit ${\omega \rightarrow 0}$.
According to L'H$\hat{\rm{o}}$pital's rule, we have \cite{molnar2022virtual}:
\begin{align}
\begin{split}
    \lim_{\omega \rightarrow 0} \frac{\GaI}{\omega} = -\frac{n}{\kappa_{\rm h}},\  
    \lim_{\omega \rightarrow 0} \frac{1-\Gamma_{\rm I}^2(\omega)-\Gamma_{\rm R}^2(\omega)}{\omega^2} = L_{\rm h},
\end{split}
\label{eq:LHopital Rule}
\end{align}
where ${L_{\rm h}=n(A_{\rm h} + 2B_{\rm h} -2\kappa_{\rm h}) / (A_{\rm h} \kappa_{\rm h}^2)}$ and $n$ denotes the number of HVs.
By substituting (\ref{eq:P(w) expand}) and (\ref{eq:LHopital Rule}) into (\ref{eq:string stability boundary, w=0}), we can derive $\omega=0$ string stability boundary:
\begin{align}
\begin{split}
    A = 0,\ A = - \frac{2 \kappa_{\rm h} B_1 + (2n\kappa + 2\kappa_{\rm h}) B_{n+1} - 2 \kappa \kappa_{\rm h}}{\kappa_{\rm h} (L_{\rm h} \kappa^2 + 1)},
\end{split}
\label{eq:string stability boundary w=0 (A,B1)}
\end{align}
in the ${(A,B_1)}$ space, and:
\begin{align}
\begin{split}
    B_1 = - \frac{(L_{\rm h} \kappa^2 + 1)}{2} A - \bigg(\frac{n\kappa }{\kappa_{\rm h}} + 1 \bigg) B_{n+1} + \kappa,
\end{split}
\label{eq:string stability boundary w=0 (B1,Bn)}
\end{align}
in the ${(B_1,B_{n+1})}$ space.

Equation (\ref{eq:string stability boundary, w>0}) defines the ${\omega>0}$ head-to-tail string stability boundary, which can be rearranged into:
\begin{align}
\begin{split}
    d_0(\omega) - n_0(\omega) \mathrm{cos}K + n_1(\omega) \mathrm{sin}K &= 0, 
    \\
    d_1(\omega) - n_0(\omega) \mathrm{sin}K - n_1(\omega) \mathrm{cos}K &= 0.
\end{split}
\label{eq:string stability boundary, w>0 expand}
\end{align}
Substituting (\ref{eq:link transfer function detail}) and (\ref{eq:h2t transfer func final}) into (\ref{eq:string stability boundary, w>0}) and solving it for $A$, $B_1$ and $B_{n+1}$, 
we get the ${\omega > 0}$ head-to-tail string stability boundaries:
\begin{align}
\begin{split}
    A = \frac{{q}_2 {r}_1 - {q}_1 {r}_2}{{p}_1 {q}_2 - {p}_2 {q}_1}, \quad
    B_1 = \frac{{p}_1 {r}_2 - {p}_2 {r}_1}{{p}_1 {q}_2 - {p}_2 {q}_1},
\end{split}
\label{eq:string stability boundary final (A,B1)}
\end{align}
in the ${(A,B_1)}$ space, and:
\begin{align}
\begin{split}
    B_1 = \frac{{q}_4 {r}_3 - {q}_3 {r}_4}{{p}_3 {q}_4 - {p}_4 {q}_3}, \quad
    B_{n+1} = \frac{{p}_3 {r}_4 - {p}_4 {r}_3}{{p}_3 {q}_4 - {p}_4 {q}_3}.
\end{split}
\label{eq:string stability boundary final (B1,Bn)}
\end{align}
in the ${(B_1,B_{n+1})}$ space, with coefficients:
\begin{align}
\begin{split}
    & {p}_1 = \kappa (\GaR \mathrm{cos}K - \GaI \mathrm{sin}K -1), 
    \\  
    & {q}_1 = p_3 = -\omega (\GaR \mathrm{sin}K + \GaI \mathrm{cos}K), 
    \\
    & {p}_2 = p_4 = \kappa (\GaR \mathrm{sin}K + \GaI \mathrm{cos}K) - \omega, 
    \\
    & {q}_2 = \omega ( \GaR \mathrm{cos}K - \GaI \mathrm{sin}K) - \omega,
    \\
    & {r}_1 = \omega (B_{n+1} \mathrm{sin}K - \omega),
    \\
    & {r}_2 = B_{n+1} \omega (1-\mathrm{cos}K) - \omega^3 \xi, 
    \\ 
    & {q}_3 = - \omega \mathrm{sin}K,\quad \quad \ \, {q}_4 = \omega (\mathrm{cos}K - 1),
    \\ 
    & {r}_3 = -p_1 A - \omega^2,\quad \ {r}_4 = -p_2 A - \omega^3\xi.
\end{split}
\label{eq:string stability boundary coefficents (A,B1)}
\end{align}

\newpage
%%%%%%%%%%%%%%%%%%%%%%%%%%%%%%%%%%%%%%%%%%%%%%%%%%%%%%%%%%%%%%%%%%%%%%%%%%%%%%%%
\subsection{Proof of Theorem \ref{theo:Saf Region A-B1-BN}}\label{app:B}
%%%%%%%%%%%%%%%%%%%%%%%%%%%%%%%%%%%%%%%%%%%%%%%%%%%%%%%%%%%%%%%%%%%%%%%%%%%%%%%%

\begin{proof}[Proof of Theorem \ref{theo:Saf Region A-B1-BN}]
We prove safety via Corollary~\ref{cor:extendedNagumo}.
For this, we prove that (\ref{eq:Nagumo Lgh<0}) holds.
We express ${k_{\rm s}(x)-k_{\rm d}(x)}$ by substituting~(\ref{eq:extended SafFilter}) and~(\ref{eq:CCC general}):
\begin{multline}
    k_{\rm s}(x)-k_{\rm d}(x) = \xi \gamma \big( \ksf (v_{1} - v_{0}) - a_{0} \big) +  (1 - \xi \ksf) a_{0} 
    \\
    + \xi \ksf a_{1} + \xi \gamma_{\rm e} \big( \ksf (v_{1} - v_{0}) - a_{0} + \gamma ( \ksf (D_0 - \Dsf) - v_0) \big) 
    \\
    - A \big( V(D_0) - v_0 \big) - B_1 \big( W(v_{1}) - v_0 \big) - \sum_{k \in \Phi} B_k \big( W(v_{k}) - v_0 \big).
    \big).
\label{eq:ks-kd}
\end{multline}
Based on (\ref{eq:V},\ref{eq:W}), we use ${V(D_0) \leq \kappa(D_0-\Dst)}$, ${W(v_{1}) \leq v_{1}}$, ${W(v_{k}) \leq v_{k}}$. Furthermore, we consider ${h_{\rm e}(x)=0}$, ${A \geq 0}$, ${B_1 \geq 0}$, ${B_k \geq 0,\ \forall k \in \Phi}$, which yields:
\begin{multline}
    k_{\rm s}(x) - k_{\rm d}(x) \geq (\xi \gamma + \xi \ksf - 1) \big( \ksf (v_{1} - v_{0}) - a_{0} \big)  
    \\
     + \xi \ksf a_{1} + \big( \ksf (1 - \xi \ksf) - B_1 \big) (v_{1}-v_{0}) 
     \\
     - A \big( \kappa(D_0-\Dst) - v_0 \big) - \sum_{k \in \Phi} B_k \big( v_{k} - v_0 \big).
\label{eq:ks-kd wo nonlinear}
\end{multline}
Then we substitute ${\ksf (v_{1}-v_0) - a_0}$ from ${h_{\rm e}(x)=0}$, i.e., ${\ksf (v_{1}-v_0) - a_0 = -\gamma \big( \ksf(D_0-\Dsf) - v_0 \big)}$, while we add and subtract $Ah(x)$ on the right-hand side, which leads to:
\begin{multline}
    k_{\rm s}(x)-k_{\rm d}(x) \geq A(\kappa_{\rm{sf}} - \kappa)( D_0 - \Dsf) + A \kappa (\Dst - \Dsf) 
    \\
    - (\xi \gamma^2 + \xi \ksf \gamma -\gamma + A)\big(\ksf(D_0 - \Dsf) - v_0 \big) + \xi \ksf a_{1} 
    \\
    + ( \ksf- \xi\ksf^2 - B_1 ) (v_{1} - v_0) - \sum_{k \in \Phi} B_k \big( v_{k} - v_0 \big).
    % B_n \big( W(v_{i+1}) - v_i \big).  
\label{eq:ks-kd wo nonlinear 2}
\end{multline}
Since ${h(x)=\ksf(D_0 - \Dsf) - v_0 \geq 0}$ is considered in (\ref{eq:Nagumo Lgh<0}), the third term on the right-hand side is nonnegative if:
\begin{multline}
    \xi \gamma^2 + \xi \ksf \gamma -\gamma + A = 
    \\
    = \xi \bigg(\gamma - \frac{1-\xi\ksf}{2\xi} \bigg)^2 + A - \frac{(1-\xi\ksf)^2}{4\xi} \leq 0,  
\label{eq:A upper bound}
\end{multline}
which holds given the upper bound on $A$ in (\ref{eq:Saf Region}).
Then we have:
\begin{multline}
\ks(x) - \kd(x) \geq A(\ksf - \kappa)(D_0 - D_{\rm sf})
+ A\kappa(D_{\rm st} - D_{\rm sf} ) 
\\
+ \xi \ksf a_1 - |\ksf - \xi \ksf^2 - B_1| |v_1 - v_0|
- \sum_{k \in \Phi} B_k |v_k - v_0|,
\label{eq:ks-kd wo nonlinear 3}
\end{multline}
where ${D_0 - D_{\rm sf} \geq 0}$ since ${h(x) \geq 0}$ and ${v_0 \geq 0}$.
Finally, considering ${\left | v_1 - v_{0} \right | \leq \bar{v}}$, ${\left | v_k - v_{0} \right | \leq \bar{v}}$, ${\forall k \in \Phi}$, ${a_{1} \geq - a_{\rm{min}}}$, ${\ksf \geq \kappa}$, (\ref{eq:ks-kd wo nonlinear 2}) leads to:
\begin{multline}
    k_{\rm s}(x)-k_{\rm d}(x) \geq A \kappa (\Dst - \Dsf) 
    \\
    - \bigg(\left | \ksf - \xi\ksf^2 - B_1 \right | + \sum_{k \in \Phi} B_k \bigg)\bar{v} - \xi\ksf a_{\rm{min}} \geq 0.  
\label{eq:ks-kd final}
\end{multline}
Considering the lower bound on $A$ in (\ref{eq:Saf Region}) and ${D_{\rm st} > D_{\rm sf}}$, (\ref{eq:ks-kd final}) implies that (\ref{eq:Nagumo Lgh<0}) holds. Therefore, Corollary~\ref{cor:extendedNagumo} provides safety and completes the proof.
\end{proof}

\newpage
%%%%%%%%%%%%%%%%%%%%%%%%%%%%%%%%%%%%%%%%%%%%%%%%%%%%%%%%%%%%%%%%%%%%%%%%%%%%%%%%
\subsection{Proof of Corollary~\ref{cor:critical lag}}\label{app:C}
%%%%%%%%%%%%%%%%%%%%%%%%%%%%%%%%%%%%%%%%%%%%%%%%%%%%%%%%%%%%%%%%%%%%%%%%%%%%%%%%

\begin{proof}[Proof of Corollary \ref{cor:critical lag}]
Based on (\ref{eq:Saf Region}) with ${\gamma \! = \! (1 \!- \! \xi\ksf) \! / \! (2\xi)}$ and ${\xi < 1/\ksf}$,  safe control gains exist only when: 
\begin{equation}
    \frac{(1-\xi \ksf)^2}{4 \xi} \geq \frac{\xi \ksf a_{\rm{min}}}{\kappa (\Dst - \Dsf)}.
    \label{eq:critical lag condition frac}
\end{equation}
Consider:
\begin{equation}
    \rho = \sqrt{\frac{4 \ksf a_{\rm{min}}}{\kappa (\Dst-\Dsf)}},
    \label{eq:rho}
\end{equation}
(\ref{eq:critical lag condition frac}) leads to:
\begin{equation}
    (1 - \xi \ksf + \xi \rho)(1 - \xi \ksf -\xi \rho) \geq 0
    \label{eq:critical lag condition}
\end{equation}
If ${\ksf = \rho}$, (\ref{eq:critical lag condition}) leads to:
\begin{equation}
    \xi \leq \frac{1}{2 \ksf} = \frac{1}{\ksf + \rho}.
    \label{eq:critical lag ksf = rho}
\end{equation}
If ${\ksf > \rho}$, solving (\ref{eq:critical lag condition}) yields:
\begin{equation}
    \xi \leq \frac{1}{\ksf + \rho}\quad \mathrm{or} \quad \xi \geq \frac{1}{\ksf - \rho}.
    \label{eq:critical lag ksf > rho}
\end{equation}
Since ${\xi < 1/\ksf}$, only the former case in (\ref{eq:critical lag ksf > rho}) holds.
If ${\ksf < \rho}$, (\ref{eq:critical lag condition}) gives:
\begin{equation}
    \frac{1}{\ksf - \rho} \leq \xi \leq \frac{1}{\ksf + \rho}.
    \label{eq:critical lag ksf < rho}
\end{equation}
Since ${\xi > 0}$ in Theorem~\ref{theo:Saf Region A-B1-BN}, only the upper bound of the $\xi$ in (\ref{eq:critical lag ksf < rho}) holds. Therefore, ${\xi \leq \xi_{\rm cr} = 1/(\ksf + \rho)}$ holds for all cases, as given by (\ref{eq:critical lag}).
\end{proof}

%%%%%%%%%%%%%%%%%%%%%%%%%%%%%%%%%%%%%%%%%%%%%%%%%%%%%%%%%%%%%%%%%%%%%%%%%%%%%%%%
\subsection{Proof of Theorem 4}\label{app:D}
%%%%%%%%%%%%%%%%%%%%%%%%%%%%%%%%%%%%%%%%%%%%%%%%%%%%%%%%%%%%%%%%%%%%%%%%%%%%%%%%

\begin{proof}[Proof of Theorem \ref{theo:Saf Region general CCC}]
Similar to the proof of Theorem~\ref{theo:Saf Region A-B1-BN}, we prove safety by showing that (\ref{eq:Nagumo Lgh<0}) holds, where
\begin{align}
\begin{split}
    k_{\rm s}(x) &- k_{\rm d}(x) = \xi \gamma \big( \ksf (v_{1} - v_{0}) - a_{0} \big) + (1 - \xi \ksf) a_{0}
    \\
    &+ \xi \gamma_{\rm e} \big( \ksf (v_{1} - v_{0}) - a_{0} + \gamma ( \ksf (D_0 - \Dsf) - v_0) \big) 
    \\
    &+ (\xi \ksf - C_1) a_{1} - A \big( V(D_0) - v_0 \big) - B_1 \big( W(v_{1}) - v_0 \big) 
    \\
    &- \sum_{k \in \Phi} B_k \big( W(v_{k}) - v_0 \big) - \sum_{k \in \Phi} C_k a_{k}.
\end{split}
\label{eq:ks-kd acc}
\end{align}
By following the steps in (\ref{eq:ks-kd wo nonlinear})-(\ref{eq:ks-kd wo nonlinear 3}), the following inequality can be derived:
\begin{align}
\begin{split}
    k_{\rm s}(x) - k_{\rm d}(x) & \geq A(\kappa_{\rm{sf}} - \kappa)( D_0 - \Dsf) + A \kappa (\Dst - \Dsf)
    \\
    &+ (\xi \ksf -C_1) a_{1} - | \ksf- \xi\ksf^2 - B_1 | |v_{1} - v_0| 
    \\
    &- \sum_{k \in \Phi} B_k | v_{k} - v_0 | - \sum_{k \in \Phi} C_k a_{k},
\end{split}
\label{eq:ks-kd wo nonlinear acc}
\end{align}
where the additional acceleration feedback terms show up compared to (\ref{eq:ks-kd wo nonlinear 3}).
Then, considering ${a_{1}, a_{k} \in [-\bar{a}, \bar{a}], \forall k \in \Phi}$, we obtain:
\begin{multline}
    k_{\rm s}(x)-k_{\rm d}(x) \geq A \kappa (\Dst - \Dsf)
    \\
      -\! \big(\left | \ksf \!-\! \xi\ksf^2 \!-\! B_1 \right | \!+\! \sum_{k \in \Phi} B_k \big)\bar{v} \!-\! \big( \left| \xi\ksf \!-\! C_1 \right| \!+\! \sum_{k \in \Phi} |C_k| \big) \bar{a},  
\label{eq:ks-kd final acc}
\end{multline}
cf.~(\ref{eq:ks-kd wo nonlinear 3}).
Using the lower bound on $A$ in (\ref{eq:Saf Region w/ acc}) and ${D_{\rm st} > D_{\rm sf}}$, we obtain that (\ref{eq:Nagumo Lgh<0}) holds and Corollary~\ref{cor:extendedNagumo} implies safety.
\end{proof}

%%%%%%%%%%%%%%%%%%%%%%%%%%%%%%%%%%%%%%%%%%%%%%%%%%%%%%%%%%%%%%%%%%%%%%%%%%%%%%%%
\bibliographystyle{IEEEtran}
\bibliography{references}
%%%%%%%%%%%%%%%%%%%%%%%%%%%%%%%%%%%%%%%%%%%%%%%%%%%%%%%%%%%%%%%%%%%%%%%%%%%%%%%%

\vspace{-8 mm}
\begin{IEEEbiography}[{\includegraphics[width=1in,height=1.25in,clip,keepaspectratio]{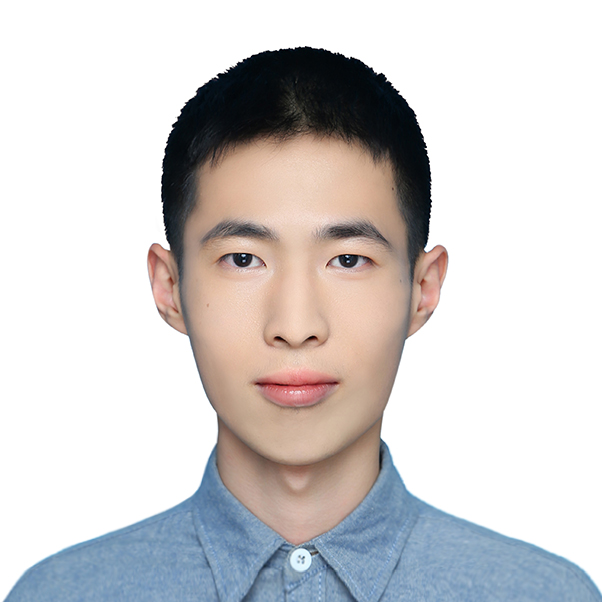}}]{Yuchen~Chen} received the BSc degree in Mechanical Engineering from South China Univerisity of Technology, China, in 2022 and the MSc degree from the University of Michigan, USA, in 2024. He is currently pursuing the PhD degree in mechanical engineering with the University of Michigan, Ann Arbor, USA. His current research interests include control of connected autonomous vehicles, safety-critical control, and vehicle dynamics. 
\end{IEEEbiography}

\vspace{-8mm}
\begin{IEEEbiography}[{\includegraphics[width=1in,height=1.25in,clip,keepaspectratio]{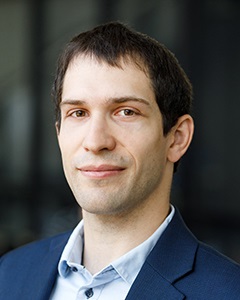}}]
{Tamas G. Molnar} is an Assistant Professor of Mechanical Engineering at the Wichita State University since 2023. Beforehand, he held postdoctoral positions at the California Institute of Technology, from 2020 to 2023, and at the University of Michigan, Ann Arbor, from 2018 to 2020. He received his PhD and MSc in Mechanical Engineering and his BSc in Mechatronics Engineering from the Budapest University of Technology and Economics, Hungary, in 2018, 2015, and 2013. His research interests include nonlinear dynamics and control, safety-critical control, and time delay systems with applications to connected automated vehicles, robotic systems, and autonomous systems.
\end{IEEEbiography}

\vspace{-8mm}
\begin{IEEEbiography}
[{\includegraphics[width=1in,height=1.25in,clip,keepaspectratio]{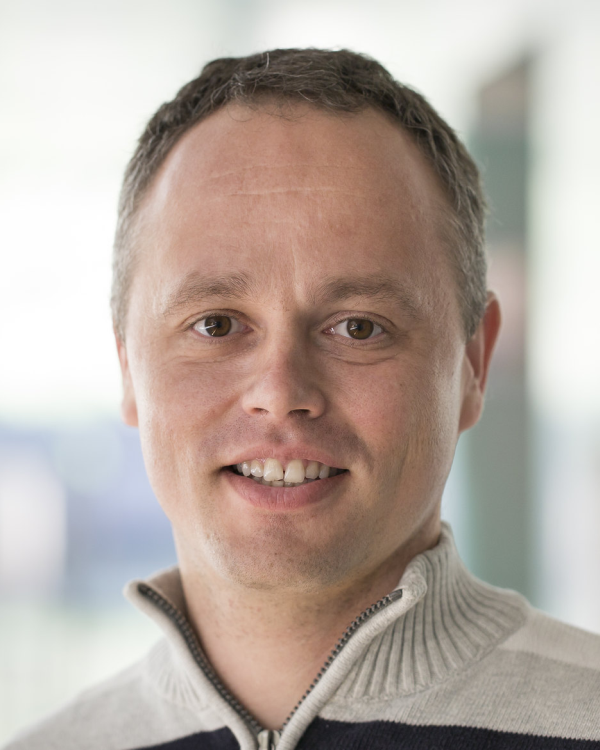}}]{G\'abor Orosz} received the MSc degree in Engineering Physics from the Budapest University of Technology, Hungary, in 2002 and the PhD degree in Engineering Mathematics from the University of Bristol, UK, in 2006. 
He held postdoctoral positions at the University of Exeter, UK, and at the University of California, Santa Barbara. 
In 2010, he joined the University of Michigan, Ann Arbor where he is currently a Professor in Mechanical Engineering and in Civil and Environmental Engineering. 
From 2017 to 2018 he was a Visiting Professor in Control and Dynamical Systems at the California Institute of Technology. 
In 2022 he was a Distinguished Guest Researcher in Applied Mechanics at the Budapest University of Technology and from 2023 to 2024 he was a Fulbright Scholar at the same institution. 
His research interests include nonlinear dynamics and control, time delay systems, machine learning, and data-driven systems with applications to connected and automated vehicles, traffic flow, and biological networks. 
\end{IEEEbiography}

\end{document}